\begin{document}
\title{Lattice Index Codes from Algebraic Number Fields}
\author{Yu-Chih Huang\\
Department of Communication Engineering \\
National Taipei University\\
{\tt\small {\{ychuang@mail.ntpu.edu.tw\}} }
\thanks{This work was published in part at the 2015 IEEE International Symposium on Information Theory (ISIT), Hong Kong. This work was supported by the Ministry of Science and Technology, Taiwan, under Grant MOST 104-2218-E-305-001-MY2.}}

\maketitle

\begin{abstract}
    Broadcasting $K$ independent messages to multiple users where each user demands all the messages and has a subset of the messages as side information is studied. Recently, Natarajan, Hong, and Viterbo proposed a novel broadcasting strategy called lattice index coding which uses lattices constructed over some principal ideal domains (PIDs) for transmission and showed that this scheme provides uniform side information gains. In this paper, we generalize this strategy to general rings of algebraic integers of number fields which may not be PIDs. Upper and lower bounds on the side information gains for the proposed scheme constructed over some interesting classes of number fields are provided and are shown to coincide asymptotically in message rates. This generalization substantially enlarges the design space and partially includes the scheme by Natarajan, Hong, and Viterbo as a special case. Perhaps more importantly, in addition to side information gains, the proposed lattice index codes benefit from diversity gains inherent in constellations carved from number fields when used over Rayleigh fading channel. Some interesting examples are also provided for which the proposed scheme allows all the messages to be from the same field.
\end{abstract}

\section{Introduction}
Broadcast with receiver message side information has recently attracted a lot of attention in the network layer with many interesting and important advances. For example, in the index coding problem \cite{birk98} \cite{bar-yossef11} \cite{rouayheb10}, a sender broadcasts a set of independent messages to several receivers where each receiver demands a subset of messages and has another subset of messages as side information. Another excellent example is the caching problem \cite{niesen14} where a sender broadcasts a set of independent messages with a fraction of messages (or functions of messages) being prefetched into receivers beforehand during off-peak hours. For such problems, it has been shown that a carefully designed broadcasting strategy which allows the receivers to better exploit side information can substantially improve the system throughput.

As the physical layer counterpart, Gaussian broadcast channels with receiver side information have recently been popular as well. In \cite{wu07}, the two-user Gaussian broadcast channel with receiver message side information was studied and the capacity region was fully characterized for all (5 in total) possible side information configurations. Yoo \textit{et al.} in \cite{tie09} considered the three-user scenario where they showed that a separation-based scheme which separately employs index coding and physical layer coding can achieve the capacity region to within a constant gap for all side information configurations regardless of channel parameters. For the three-user case, the capacity region was characterized for some particular side information configurations \cite{sima14} \cite{asadi14}. For the case having more than three users, our knowledge is fairly limited. This problem has also been independently studied in the context of the broadcast phase of the two-way (or multi-way) relay channel \cite{oechtering08}.

Recently, in \cite{viterbo14index}, Natarajan \textit{et al.} considered a special class of Gaussian broadcast channels with receiver message side information as shown in Fig.~\ref{fig:system_model}. In this model, each receiver demands \textit{all the messages} and can have an arbitrary subset of messages as side information. Unlike other work focusing on the capacity region, \cite{viterbo14index} focused on designing practical codes/modulations and proposed the so-called lattice index codes that provide large side information gains for any side information configuration. In \cite{viterbo_isit15}, they further proposed the index coded modulation which adopts powerful linear codes as outer codes in conjunction with index modulation as inner code to enjoy coding gains on top of side information gains.
\begin{figure}
    \centering
    \includegraphics[width=3.5in]{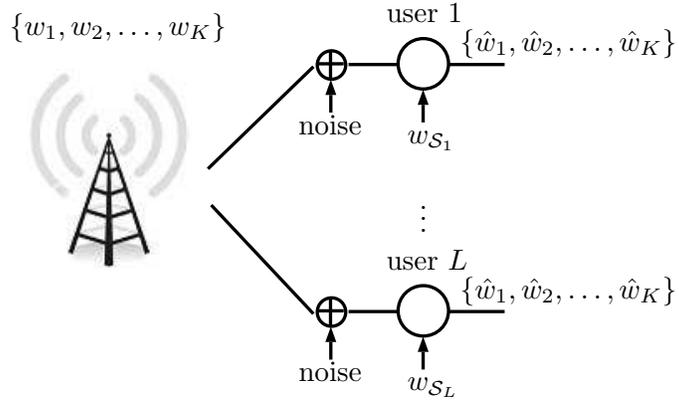}
    \caption{Index coding over AWGN channel where there are $K$ messages and $L$ users. The user $l$ has a subset of message $w_{\mc{S}_l}$ as side information.}
    \label{fig:system_model}
\end{figure}

In this paper, we study the same special class of Gaussian broadcast channels with receiver message side information as that in \cite{viterbo14index} and also focus on practical code/modulation design. The contributions of this paper is summarized in the following:
\begin{itemize}
    \item In Section~\ref{sec:proposed_const}, we generalize the lattice index coding scheme to a general ring of algebraic integers. This substantially expands the design space as the scheme in \cite{viterbo14index} only considers $\mbb{Z}$, $\Zi$, $\Zw$, and $\mbb{H}$. This is a nontrivial generalization as the rings mentioned above are all principal ideal domains (PID) while a general ring of algebraic integers may not be so. Our generalization partially subsumes the lattice index codes in \cite{viterbo14index} as special cases (the design over $\mbb{H}$ is not covered).
    \item In Section~\ref{sec:proposed_property}, we provide bounds on the side information gains for the proposed scheme designed over some interesting families of number fields. This includes totally real number fields in Theorem~\ref{thm:si_gain_real}, totally complex number fields in Theorem~\ref{thm:si_gain_complex}, and imaginary quadratic integers happening to be PID in Theorem~\ref{thm:si_gain_Q}. The bounds are shown to coincide at least asymptotically in message rates.
    \item We also consider the same broadcast problem over Rayleigh fading channel where we show in Section~\ref{sec:proposed_property} that the proposed lattice codes exhibit diversity gains in addition to side information gains for any given message side information. In addition, for the proposed scheme constructed from a totally real number field, we further show that higher side information gains can be obtained as the minimum product distance is also increased when revealing messages to the receiver.
    \item It has been pointed out in \cite{viterbo14index} that one drawback of the scheme therein is that messages are from different fields; however, in many applications, one may want their messages to be from the same field. In Section~\ref{sec:design}, we provide some interesting design examples to demonstrate that as a consequence of having a larger design space, one can easily find instances of the proposed schemes where messages are from the same field.
    \item In Section~\ref{sec:CIM}, similar to \cite{viterbo_isit15}, we realize coded index modulation with the proposed lattice index modulation as inner code and low-density parity-check (LDPC) codes as outer codes to obtain coding gains on top of side information gains.
\end{itemize}

\subsection{Organization}
The paper is organized as follows. In Section~\ref{sec:problem}, we state the problem of the Gaussian broadcast channel with receiver message side information and quickly review the lattice index coding scheme in \cite{viterbo14index}. In Section~\ref{sec:proposed_LIC}, we present the proposed lattice index coeds from number fields and prove some properties for the proposed codes. Some interesting design examples are given in Section~\ref{sec:design} followed by simulation results of LDPC coded index modulation in Section~\ref{sec:CIM}. Section~\ref{sec:conclude} concludes the paper. In Appendix~\ref{apx:prelim}, some background on algebra and algebraic number theory are reviewed. On the other hand, lattices and lattice codes have recently been popular and frequently discussed in the literature. Therefore, for the sake of brevity, we do not review lattices and lattice codes in this paper even though the proposed scheme requires the knowledge on them. The interested reader is referred to \cite{conway1999sphere} \cite{zamir_book} \cite{erez05}.

\subsection{Notations}
Throughout the paper, we use $\mbb{R}$ and $\mbb{C}$ to represent the set of real numbers and complex numbers, respectively. We use $j\defeq \sqrt{-1}$ to denote the imaginary unit. Vectors and matrices are written in lowercase boldface and uppercase boldface, respectively. We use $\times$ to denote the Cartesian product.

\section{Problem Statement}\label{sec:problem}
We consider that a sender broadcasting $K$ independent messages $(w_1,w_2,\ldots,w_K)$ to $L$ receivers. The sender jointly encodes the messages to $\mathbf{x}$ which belongs to a code (or constellation) $\mc{C}\subset\mbb{R}^n$ and is subject to the unit average power constraint. The received signal at the receiver $l$ is given by
\begin{equation}
    \mathbf{y}_l = \sqrt{\mathrm{SNR}_l} \mathbf{x} + \mathbf{z}_l,
\end{equation}
where $\mathbf{z}_l \sim \mc{N}(0,\mathbf{I}/n)$ is i.i.d. Gaussian noise and $\mathrm{SNR}_l$ represents the signal-to-noise ratio corresponding to the $l$th receiver. We further assume that the receiver $l$ has a subset of messages $w_{\mc{S}_l}\defeq \{w_k: k\in\mc{S}_l\}$ as side information which is governed by its index $\mc{S}_l \subset \{1,2,\ldots,K\}$ as shown in Fig.~\ref{fig:system_model}. This problem is an analogy to the index coding problem \cite{birk98} \cite{bar-yossef11} \cite{rouayheb10} extended to the physical-layer AWGN channel.  We shall refer to this model as the AWGN network.

In this paper, similar to \cite{viterbo14index}, we particularly study the case where all the $L$ receivers demand all the $K$ messages but can have individual subsets of side information $S_l$ for $l\in\{1,\ldots,L\}$. Let $R_k$ be the message rate of $w_k$, $R_{\mc{S}_l} = \sum_{k\in\mc{S}_l} R_k$, and $R=\sum_{k=1}^K R_k$. For the AWGN network, the capacity region of this special case can be derived from a result in \cite{tuncel06} and is given by
\begin{equation}
    R - R_{\mc{S}_l} < \frac{1}{2}\log(1+\mathrm{SNR}_l),\quad\text{bits/dim}\quad l\in\{1,\ldots,L\}.
\end{equation}
An interpretation of this region is that for a capacity-achieving code, every bit of side information provides roughly 6 dB of SNR reduction asymptotically. According to this intuition, \cite{viterbo14index} defines the side information gain of the receiver $l$ for a code $\mc{C}$ as
\begin{equation}
    \Gamma(\mc{C},\mc{S}_l) \defeq \frac{10\log_{10}(d^2_{\mc{S}_l}/d^2_0)}{R_{\mc{S}_l}}\quad \text{dB/bit/dim},
\end{equation}
where $d_0$ and $d_{\mc{S}_l}$ are the minimum Euclidean distance of the code before and after $\mc{S}_l$ is given, respectively. The side information gain essentially measures the gain in squared Euclidean distance provided by each bit of side information in $\mc{S}_l$, which is highly related to SNR reduction for a given probability of error. One can also define the overall side information gain for the system as
\begin{equation}
    \Gamma(\mc{C}) \defeq \min_{\mc{S}_l}\Gamma(\mc{C},\mc{S}_l)\quad \text{dB/bit/dim}.
\end{equation}


Natarajan \textit{et al.} then proposed a novel coding scheme which uses nested lattice codes from lattices over some principal ideal domains (PIDs) and termed this scheme as ``lattice index codes". In what follows, we briefly summarize the lattice index code in \cite{viterbo14index}. We particularly use $\mbb{Z}$ for example for the sake of simplicity; however, designs over $\Zi$, $\Zw$, and $\mbb{H}$ are available in \cite{viterbo14index} as well.

Let $p_1,\ldots,p_K$ be $K$ distinct primes and $q=\Pi_{k=1}^K p_k$. We start with the one-dimensional case where lattice index codes are actually lattice index modulation. The lattice index code in \cite{viterbo14index} enforces $w_k\in\mbb{F}_{p_k}$ and encodes the message $(w_1,\ldots,w_K)$ to
\begin{equation}\label{eqn:lim_viter}
    x = \gamma\cdot\left[\left(\frac{q}{p_1} w_1 + \ldots + \frac{q}{p_K} w_K \right)\hspace{-3pt}\mod q\mbb{Z}\right],
\end{equation}
where $\gamma\in\mbb{R}$ is for satisfying the power constraint. One observation we would like to point out (and discuss in more detail later) is that the mapping used above is closely related to an isomorphism guaranteed by the Chinese remainder theorem (CRT). It has been shown in \cite{viterbo14index} that lattice index codes thus constructed provide uniform side information gains of 6 dB. i.e., $\Gamma(\mc{C},\mc{S})$ is 6 dB for every choice of $\mc{S}\subset \{1,2,\ldots,K\}$. This result shows that the lattice index code in \cite{viterbo14index} mimics the behavior of a capacity-achieving code. Moreover, any side information gain larger than 6 dB translates into packing inefficiency of the original code $\mc{C}$ when used over the AWGN network \cite{viterbo14index}.

Now, let $\Lambda$ be a $\mbb{Z}$-lattice. Let $\Lambda_s = q\Lambda$ and $\Lambda_k = \frac{q}{p_k}\Lambda$. A lattice index code (with dimension larger than 1), is given by
\begin{equation}\label{eqn:lic_viter}
    \mathbf{x} = \gamma\cdot\left[\left(\Lambda_1 + \ldots + \Lambda_K \right)\hspace{-3pt}\mod \Lambda_s\right],
\end{equation}
where the message $w_k$ is encoded by $\Lambda_k/\Lambda_s$ and $\gamma\in\mbb{R}$ is again for the power constraint. Natarajan \textit{et al.} then showed that lattice index codes constructed by the above procedure with the densest $\Lambda$ in that dimension can provide an uniform side information gain of 6 dB.

\begin{remark}\label{rmk:CIM_vs_self_similar}
    We point out that the study of the one-dimensional case in \eqref{eqn:lim_viter} is very interesting despite the fact that it cannot provide coding gains. For a good lattice index modulation that provides large side information gains, one can empower it by concatenating it with powerful linear outer codes to obtain large coding gains as in \cite{viterbo_isit15}. This separation philosophy is in general prevailing in current communication systems and will be discussed in Section~\ref{sec:CIM}. Moreover, the lattice index code in \eqref{eqn:lic_viter} is restricted to self-similar lattices (all lattices are self-similar to $\Lambda$) and hence limits the design space. This may bring difficulties to rate allocation. However, for the sake of completeness, we still provide a generalization of the lattice index coding scheme in \eqref{eqn:lic_viter} to number fields in Appendix~\ref{apx:lic_NF}.
\end{remark}

Another model that will be considered in this paper models the case where we have both Rayleigh fading and AWGN noise. This model is called the Rayleigh fading network and is given by
\begin{equation}
    \mathbf{y}_l = \sqrt{\mathrm{SNR}_l} \mathbf{H}_l \mathbf{x} + \mathbf{z}_l,
\end{equation}
where $\mathbf{H}_l$ is a diagonal matrix with diagonal elements $h_{l1}, \ldots, h_{ln}$ drawn from i.i.d. Rayleigh distribution. We further assume that the receiver $l$ has perfect knowledge of $\mathbf{H}_l$. It is well known that (see for example \cite{joseph96} \cite{tse_viswananth} \cite{oggier04}) for such a setting, diversity order $D(\mc{C})$ and the $D(\mc{C})$-minimum product distance of $\mc{C}$ are important performance metrics. The former is the minimum number of entries that $\mathbf{x}_1\neq \mathbf{x}_2\in\mc{C}$ differ from each other and the latter is defined when every pair of $\mathbf{x}_1$ and $\mathbf{x}_2$ has at least $D(\mc{C})$ different components as
\begin{equation}
    d_{p,min}(\mc{C})\defeq\underset{\mathbf{x}_1,\mathbf{x}_2\in\mc{C}}{\min} \Pi_{x_{1i}\neq x_{2i}} |x_{1i} - x_{2i}|.
\end{equation}
After revealing the messages in $\mc{S}_l$ to the receiver $l$, one can similarly define the diversity order $D(\mc{C},\mc{S}_l)$ and the $D(\mc{C},\mc{S}_l)$-minimum product distance $d_{p,min}(\mc{C},\mc{S}_l)$.


\section{Lattice Index Codes from Number Fields}\label{sec:proposed_LIC}
In this section, we propose construction of lattice index codes over rings of algebraic integers. This can be regarded as an extension of the scheme in \cite{viterbo14index} to number fields. We first discuss the proposed construction and then show some important properties of the proposed scheme. The proposed scheme heavily uses abstract algebra and algebraic number theory that are briefly reviewed in Appendix~\ref{apx:prelim}.

\subsection{Construction over Rings of Algebraic Integers}\label{sec:proposed_const}
The scheme proposed in this section is based on the observation that the lattice index code in \eqref{eqn:lim_viter} is closely related to
\begin{align}
    x &= \gamma\cdot\left[\mc{M}(w_1,\ldots,w_K)\hspace{-3pt}\mod q\mbb{Z}\right] \nonumber \\
    &= \gamma\cdot\left[\left(v_1\frac{q}{p_1} w_1 + \ldots + v_K\frac{q}{p_K} w_K \right)\hspace{-3pt}\mod q\mbb{Z}\right],
\end{align}
where $\mc{M}:\mbb{F}_{p_1}\times\ldots\times\mbb{F}_{p_K}\rightarrow \mbb{Z}/q\mbb{Z}$ is an isomorphism guaranteed by CRT and $v_k\in\mbb{Z}$ is the coefficient of the B\'{e}zout identity for $k\in\{1,\ldots,K\}$. In what follows, we generalize this idea to a general ring of algebraic integers and prove some properties of the proposed scheme. The main challenge lies in the fact that a general ring of algebraic integers may not form a PID and hence we cannot equivalently work with numbers as we did for schemes over a PID.

Let $\mbb{K}$ be an algebraic number field with degree $n=[\mbb{K}:\mbb{Q}]$ and signature $(r_1,r_2)$ and let $\Ok$ be its ring of integers. Let $\mfk{p}_1,\ldots,\mfk{p}_K$ be prime ideals of $\Ok$ lying above $p_1,\ldots,p_K$ with inertial degrees $f_1,\ldots,f_K$, respectively. Moreover, $\mfk{p}_1,\ldots,\mfk{p}_K$ are relatively prime. From CRT, there exists a ring isomorphism
\begin{equation}
    \mc{M}: \mbb{F}_{p_1^{f_1}}\times\ldots\times\mbb{F}_{p_K^{f_K}}\rightarrow \Ok/\Pi_{k=1}^{K}\mfk{p}_k.
\end{equation}
Hence, one can have the decomposition
\begin{equation}\label{eqn:decompose}
    \Ok=\mc{M}(\mbb{F}_{p_1^{f_1}},\ldots,\mbb{F}_{p_K^{f_K}})+\Pi_{k=1}^K\mfk{p}_k.
\end{equation}
One can then map everything to a lattice over $\mbb{R}^{r_1}\times\mbb{C}^{r_2}\cong\mbb{R}^n$  by $\Psi(.)$ the canonical embedding defined in \eqref{eqn:embedding} in Appendix~\ref{apx:prelim} to get $\Lambda_{\Ok}=\mc{M}(\mbb{F}_{p_1^{f_1}},\ldots,\mbb{F}_{p_K^{f_K}})+\Lambda_{\Pi_{k=1}^K\mfk{p}_k}$. Note that here, we abuse the notation to use the same $\mc{M}$ to denote the ring isomorphism before and after the mapping $\Psi$. 

We are now ready to describe the proposed lattice index coding scheme which essentially follows the same idea in \cite{viterbo14index}. Let $w_k\in\mbb{F}_{p_k^{f_k}}$ for $k\in\{1,\ldots,K\}$. For a particular set of $(w_1,\ldots,w_K)\in \mbb{F}_{p_1^{f_1}} \times \ldots \times \mbb{F}_{p_K^{f_K}}$, the proposed lattice index coding scheme first forms a complex number
\begin{equation}\label{eqn:proposed_LIC}
    \tilde{x} = \mc{M}(w_1,\ldots,w_K)\hspace{-3pt}\mod \Pi_{k=1}^K \mfk{p}_k.
\end{equation}
We then transmit $\mathbf{x}=\gamma\Psi(\tilde{x})$ where $\gamma\in\mbb{R}$ is for the power constraint. Note that the constellation defined above can be any complete set of coset representatives. We particularly choose one with the minimum energy as our constellation for saving energy. Some examples are provided in the following.

\begin{example}[PID]\label{exp:real_PID}
Consider $\mbb{K}=\mbb{Q}(\sqrt{5})$ whose ring of integers is $\mbb{Z}\left[(1+\sqrt{5})/2\right]$. Note that this is a PID so we only have to deal with numbers as every ideal is generated by a singleton. Consider the following two prime numbers $\phi_1 = \sqrt{5}$ and $\phi_2 = 4+\sqrt{5}$ with $N(\phi_1)=5$ and $N(\phi_2)=11$, respectively. i.e., $\phi_1\Ok$ and $\phi_2\Ok$ lie above $5$ and $11$, respectively. The product of these two ideals is exactly $\phi_1\phi_2\Ok$. Note that for this ring, $\Psi(\tilde{x})=[\sigma_1(\tilde{x}),~\sigma_2(\tilde{x})]$ where $\sigma_1(a+b\sqrt{5})=a+b\sqrt{5}$ and $\sigma_2(a+b\sqrt{5})=a-b\sqrt{5}$. Fig.~\ref{fig:Qr5_given_1} shows the proposed lattice index coding scheme constructed with these ideals. The labels in this figure represent the mapping from $\mbb{F}_5\times\mbb{F}_{11}$ to the constellation. One can verify that this mapping is a ring isomorphism from $\mbb{F}_5\times\mbb{F}_{11}$ to $\Ok/\phi_1\phi_2\Ok$. Also, the triangles in this figure represent the constellation after the first message is given to be 0, i,e, $w_1=0$. The same lattice index coding scheme is shown in Fig.~\ref{fig:Qr5_given_2} in which we also denote by stars the constellation after the second message is given to be 0, i.e., $w_2=0$.

\begin{figure}
    \centering
    \includegraphics[width=3.5in]{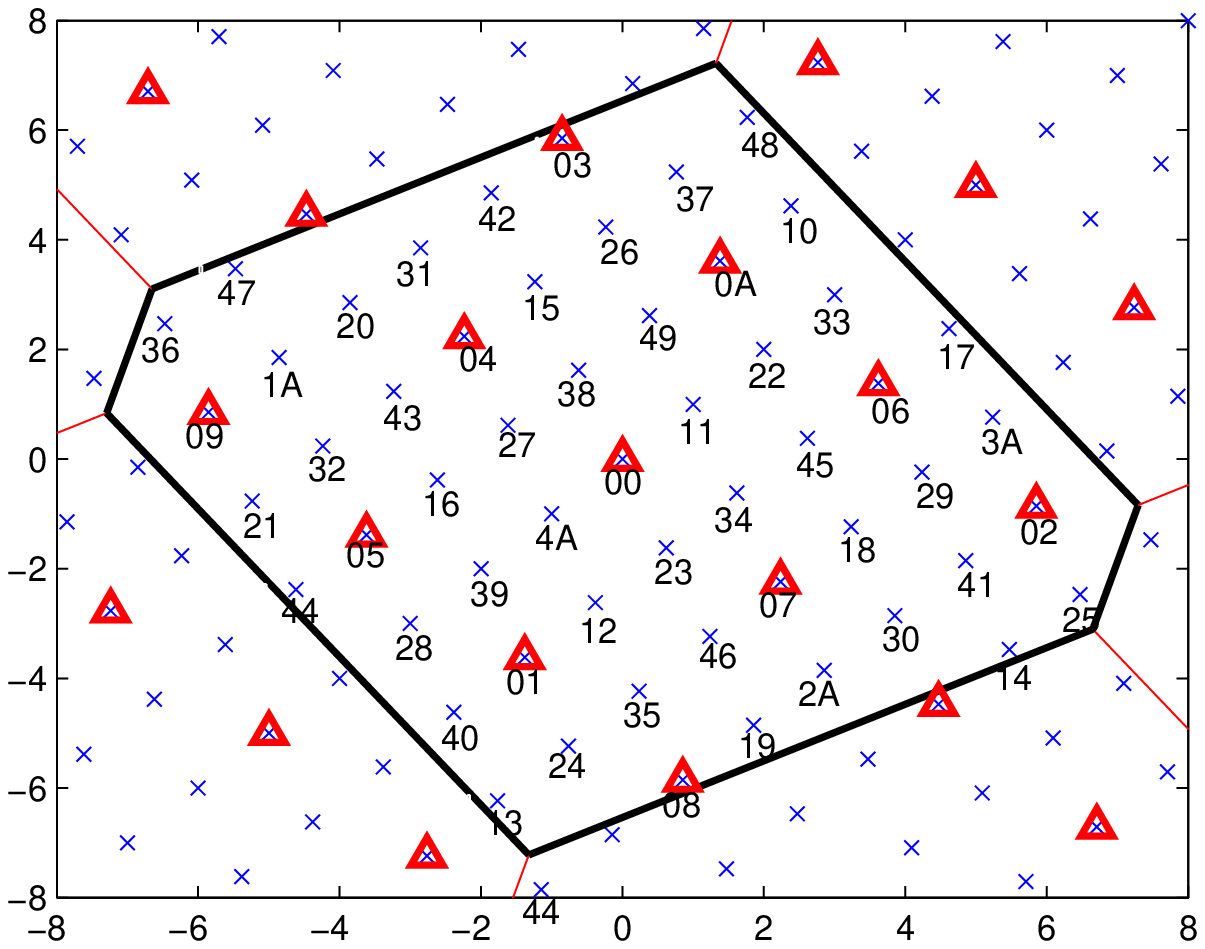}
    \caption{A lattice index constellation constructed from $\mbb{Q}(\sqrt{5})$ where only those lattice points inside the black box are used and the labels are the mapping (a ring isomorphism). The triangles form the constellation after the first message is given to be 0.}
    \label{fig:Qr5_given_1}
\end{figure}
\begin{figure}
    \centering
    \includegraphics[width=3.5in]{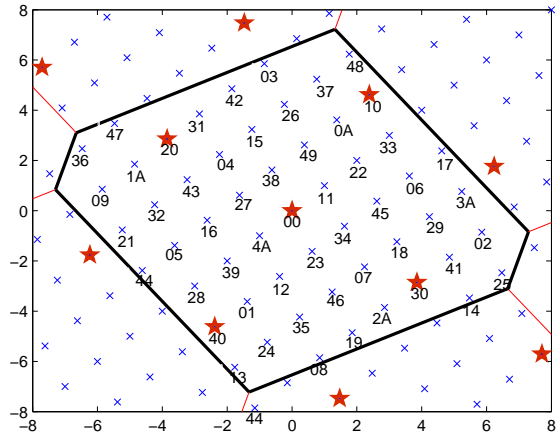}
    \caption{A lattice index constellation constructed from $\mbb{Q}(\sqrt{5})$ where only those lattice points inside the black box are used and the labels are the mapping (a ring isomorphism). The stars form the constellation after the second message is given to be 0.}
    \label{fig:Qr5_given_2}
\end{figure}
\end{example}

\begin{example}[Non-PID]\label{exp:noPID}
Consider $\mbb{K}=\mbb{Q}(\sqrt{-5})$ whose ring of integers is $\mbb{Z}[\sqrt{-5}]$. Note that this is not a PID so we have to work with ideals. Consider the following two prime ideals $\mfk{p}_1=(7,3+\sqrt{-5})$ and $\mfk{p}_2 = \bar{\mfk{p}}_1$ both lying above $7$, respectively. Note that $\mfk{p}_1\mfk{p}_2=7\Ok$ is a principal ideal. Fig.~\ref{fig:Q_n5_49_given_1} shows the proposed lattice index coding scheme thus constructed. The labels in this figure represent the mapping from $\mbb{F}_7\times\mbb{F}_7$ to the constellation. One can verify that this mapping is a ring isomorphism from $\mbb{F}_7\times\mbb{F}_7$ to $\Ok/\mfk{p}_1\cdot\mfk{p}_2$. Also, the triangles in this figure represent the constellation after fixing $w_1=0$. The same lattice index coding scheme is shown in Fig.~\ref{fig:Q_n5_49_given_2} in which we also denote by stars the constellation after fixing $w_2=0$.

\begin{figure}
    \centering
    \includegraphics[width=3.5in]{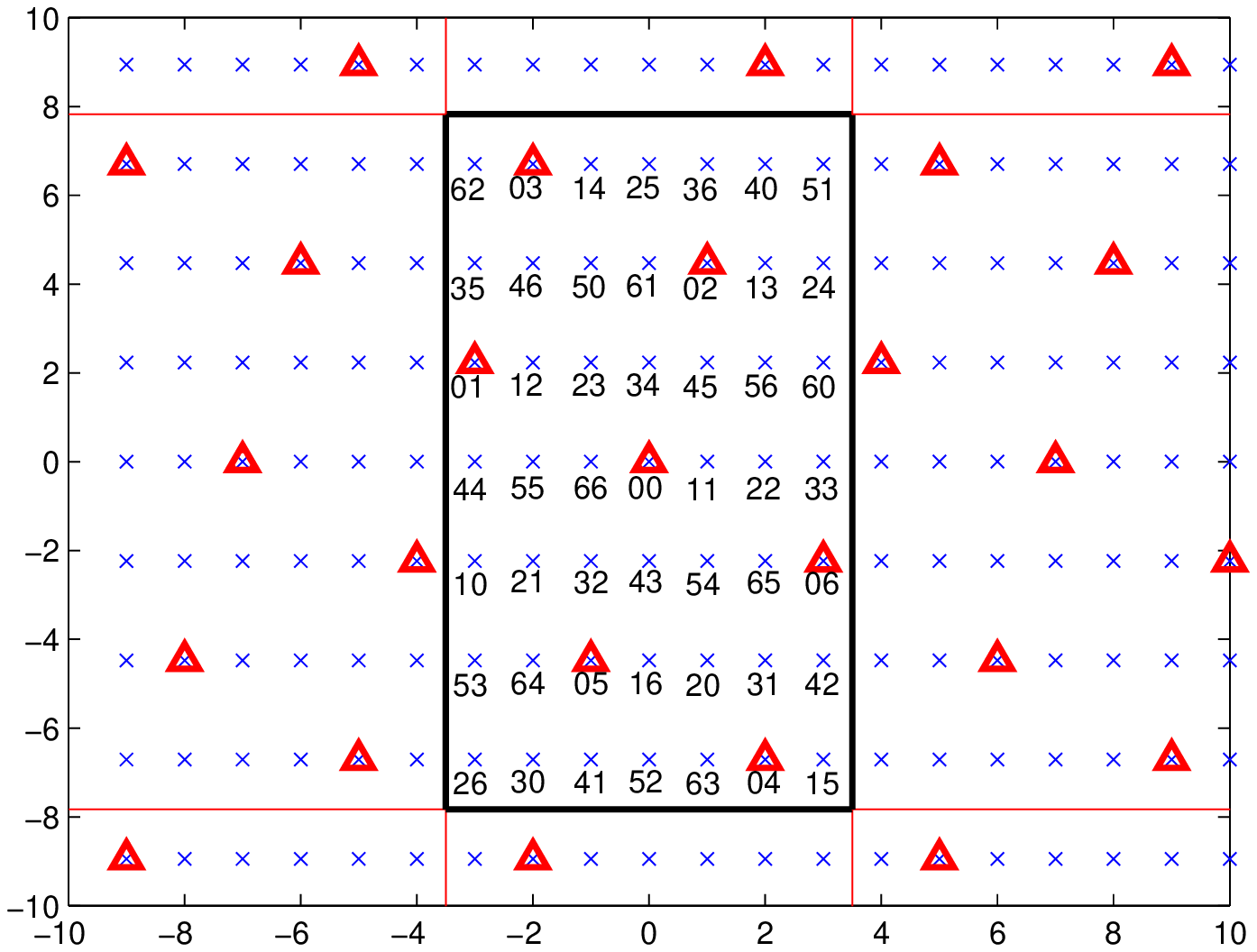}
    \caption{A lattice index constellation constructed from $\mbb{Q}(\sqrt{-5})$ where only those lattice points inside the black box are used and the labels are the mapping (a ring isomorphism). The triangles form the constellation after the first message is given to be 0.}
    \label{fig:Q_n5_49_given_1}
\end{figure}
\begin{figure}
    \centering
    \includegraphics[width=3.5in]{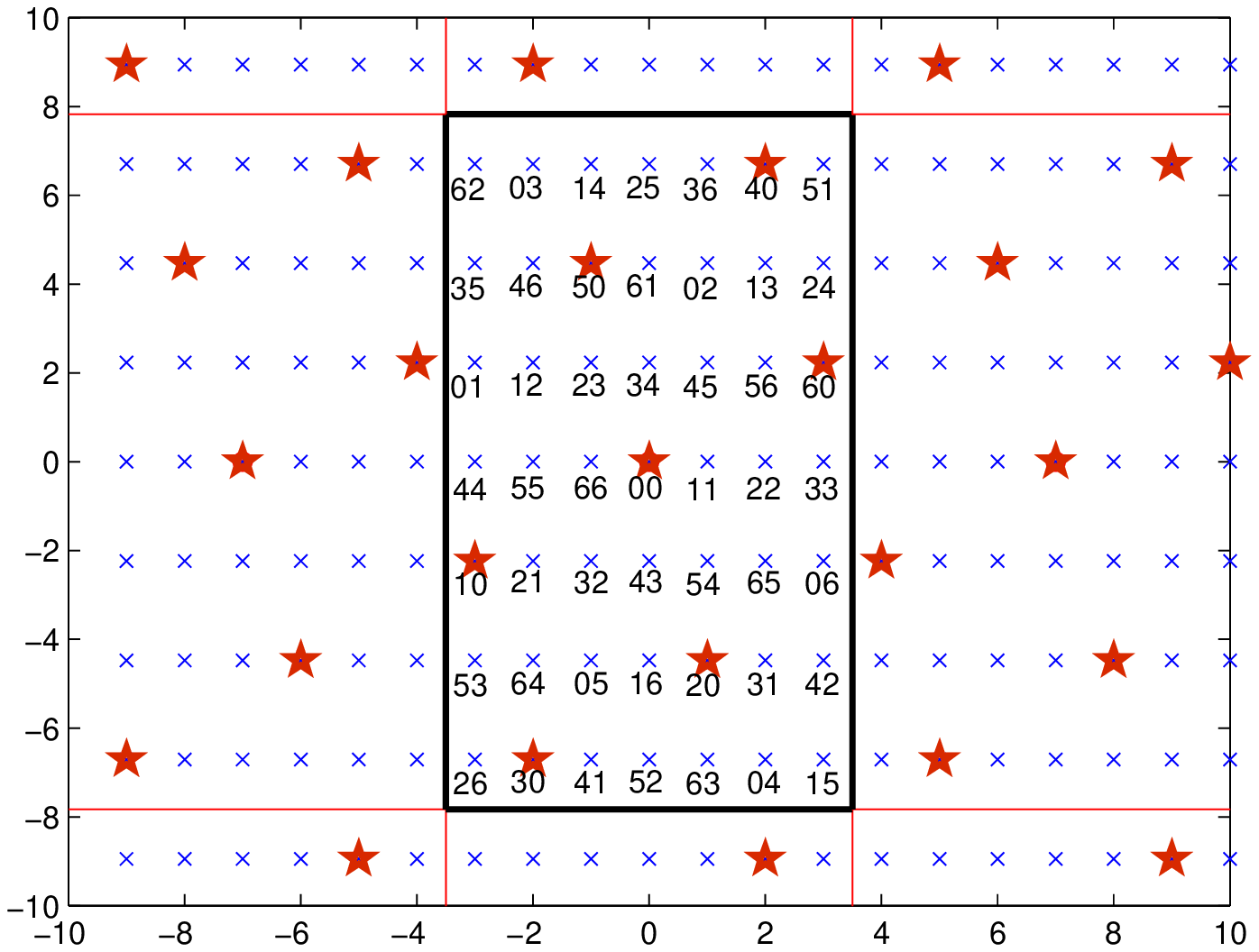}
    \caption{A lattice index constellation constructed from $\mbb{Q}(\sqrt{-5})$ where only those lattice points inside the black box are used and the labels are the mapping (a ring isomorphism). The stars form the constellation after the second message is given to be 0.}
    \label{fig:Q_n5_49_given_2}
\end{figure}
\end{example}

\begin{example}[PID]\label{exp:PID}
Consider $\mbb{K}=\mbb{Q}(\sqrt{-7})$ whose ring of integers is $\mbb{Z}\left[(1+\sqrt{-7})/2\right]$. Note that this is a PID so we only have to deal with numbers. Consider the following two prime numbers $\phi_1 = \sqrt{-7}$ and $\phi_2 = 2+\sqrt{-7}$ such that $N(\phi_1)=7$ and $N(\phi_2)=11$, respectively. i.e., $\phi_1\Ok$ and $\phi_2\Ok$ lie above $7$ and $11$, respectively. The product of these two ideals is exactly $\phi_1\phi_2\Ok$. Note that for this ring, $\Psi(\tilde{x})=\sigma_1(\tilde{x})= \tilde{x}\in\mbb{C}$. Fig.~\ref{fig:Q_n7_given_7} shows the proposed lattice index coding scheme constructed with these ideals. The labels in this figure represent the mapping from $\mbb{F}_7\times\mbb{F}_{11}$ to the constellation. One can verify that this mapping is a ring isomorphism from $\mbb{F}_7\times\mbb{F}_{11}$ to $\Ok/\phi_1\phi_2\Ok$. Also, the triangles in this figure represent the constellation after fixing $w_1=0$. The same lattice index coding scheme is shown in Fig.~\ref{fig:Q_n7_given_11} in which we also denote by stars the constellation after fixing $w_2=0$.

\begin{figure}
    \centering
    \includegraphics[width=3.5in]{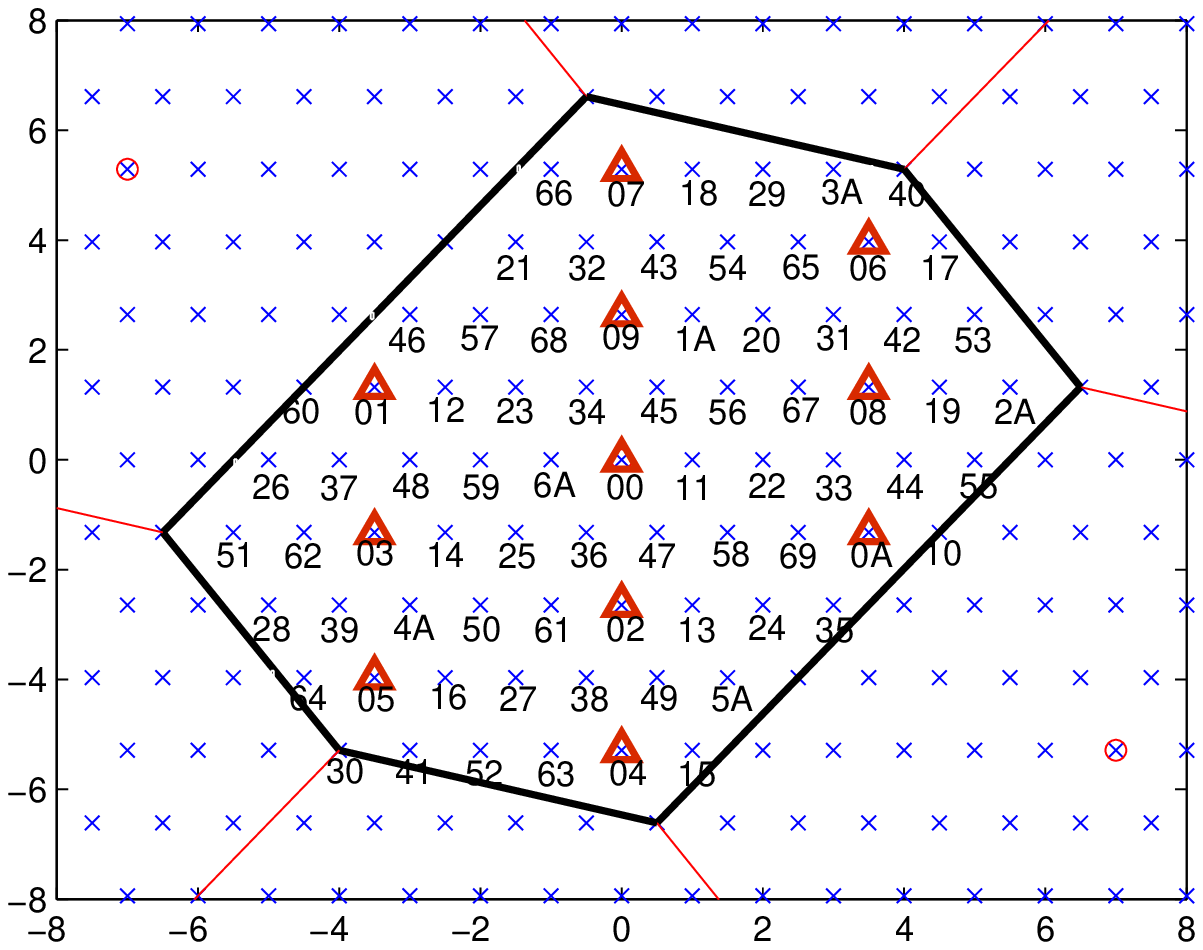}
    \caption{A lattice index constellation constructed from $\mbb{Q}(\sqrt{-7})$ where only those lattice points inside the black box are used and the labels are the mapping (a ring isomorphism). The triangles form the constellation after the first message is given to be 0.}
    \label{fig:Q_n7_given_7}
\end{figure}
\begin{figure}
    \centering
    \includegraphics[width=3.5in]{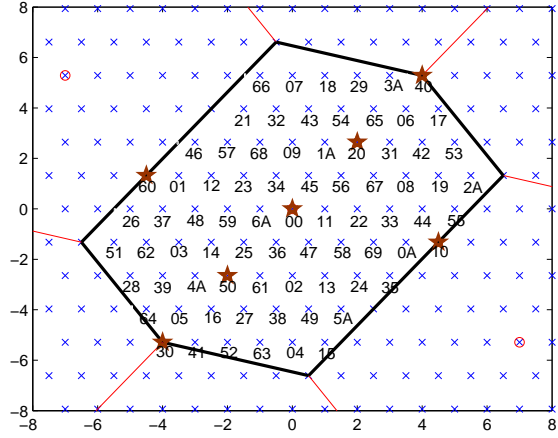}
    \caption{A lattice index constellation constructed from $\mbb{Q}(\sqrt{-7})$ where only those lattice points inside the black box are used and the labels are the mapping (a ring isomorphism). The stars form the constellation after the second message is given to be 0.}
    \label{fig:Q_n7_given_11}
\end{figure}
\end{example}

\begin{remark}
    Here, we always choose $\mfk{p}_1,\ldots,\mfk{p}_K$ to be prime ideals that are relatively prime, which is by no means necessary. In fact, the CRT only requires those ideals to be relatively prime. i.e., $\mfk{p}_k+\mfk{p}_{k'}=\Ok$ for any pair of $(k,k')\in\{1,\ldots,K\}^2$ in order to make $\cap_{k=1}^K \mfk{p}_k = \Pi_{k=1}^K \mfk{p}_k$. The reason that we restrict ourselves to prime ideals is because in a typical communication system, messages are usually encoded by some codes over finite fields. Therefore, it is of primary interest to study the case where we actually get a field. Nonetheless, all the results are generlizable to the general case where $\mfk{p}_k$s are relatively prime but may not be prime ideals.
\end{remark}

\subsection{Properties of the proposed lattice index codes}\label{sec:proposed_property}
We now provide some properties of the proposed scheme about the side information gains. Before proceeding, we note that since scaling by a scalar would not change the structure of a lattice, we set $\gamma=1$ from now on unless otherwise specified.

Consider a generic index set $\mc{S}$, let $\mfk{I}=\Pi_{k\in\mc{S}} \mfk{p}_k$. The message rates $R_{\mc{S}}$ can then be computed from the coset decomposition \eqref{eqn:decompose} as
\begin{align}\label{eqn:R_s}
    R_{\mc{S}} &= \frac{1}{n}\log_2\frac{\text{Vol}(\Lambda_{\mfk{I}})}{\text{Vol}(\Lambda_{\Ok})}= \frac{1}{n}\log_2|\Ok/\mfk{I}|  \nonumber \\
        &\overset{(a)}{=} \frac{1}{n}\log_2 N(\mfk{I}) \overset{(b)}{=} \frac{1}{n}\sum_{k\in\mc{S}}\log_2 N(\mfk{p}_k)\quad \text{bits/dim.},
\end{align}
where (a) follows from the definition of the ideal norm operation and (b) is because the ideal norm operation is multiplicative. Moreover, notice that suppose $w_1=v_1$ is given, the signal before embedding becomes
    \begin{align}
        \tilde{x}_1 &= \mc{M}(v_1,w_2,\ldots,w_K)\hspace{-3pt}\mod \Pi_{k=1}^K \mfk{p}_k \nonumber \\
        &\overset{(a)}{=} \mc{M}(v_1,0,\ldots,0) + \mc{M}(0,w_2,\ldots,w_K)\hspace{-3pt}\mod \Pi_{k=1}^K \mfk{p}_k,
    \end{align}
where (a) is because $\mc{M}$ is a ring isomorphism. One can then see that $\tilde{x}_1$ is a shifted version of $\tilde{x}_1' = \mc{M}(0,w_2,\ldots,w_K)\hspace{-3pt}\mod \Pi_{k=1}^K \mfk{p}_k$. As a consequence of CRT, one has $\tilde{x}_1'\in \mfk{p}_1$. For a general set $\mc{S}\subset \{1,\ldots,K\}$, we define $\tilde{x}'_{\mc{S}}$ to be the signal obtained by setting $w_k=0$ for $k\in\mc{S}$ in $x$. It follows that CRT guarantees $\tilde{x}'_{\mc{S}}\in \Pi_{k\in\mc{S}} \mfk{p}_k$. And hence, $\mathbf{x}$ is a shifted version of $\Psi(\tilde{x}'_{\mc{S}})\in \Psi(\Pi_{k\in\mc{S}} \mfk{p}_k)=\Psi(\mfk{I})$. In the following, we can without loss of generality assume that $\mathbf{x}=\Psi(\tilde{x}'_{\mc{S}})$ as shifting by a constant will not change the lattice structure.

We now use the Minkowski theorem in geometry of numbers to show an upper bound on the minimum distance of this constellation. Note that this method has been used in \cite[Sec. 6]{peikert07}) for ideal lattices obtained by a different embedding. Note that the fundamental Voronoi region of the lattice $\Psi(\mfk{I})$ is given by
    \begin{equation}
        \text{Vol}(\mc{V}_{\Psi(\mfk{I})}) = \frac{\sqrt{|\Delta_{\mbb{K}}|}}{2^{r_2}}N(\mfk{I}).
    \end{equation}
    Let $\mc{B}$ be a closed convex body lying in $\mbb{R}^{r_1}\times\mbb{C}^{r_2}$ as follows,
    \begin{equation}
        \mc{B} \defeq\left\{ \mathbf{b}\in \mbb{R}^{r_1}\times\mbb{C}^{r_2} : \|\mathbf{b}\|_{\infty}\leq 1 \right\},
    \end{equation}
    where $\|\mathbf{b}\|_{\infty}$ is the infinity norm. The volume of $\mc{B}$ is
    \begin{equation}
        \text{Vol}(\mc{B}) = 2^{r_1} \cdot \pi^{r_2}.
    \end{equation}
    Now, let
    \begin{equation}
        \beta = \left(\sqrt{|\Delta_{\mbb{K}}|}N(\mfk{I})\right)^{\frac{1}{n}}\left(\frac{2}{\pi}\right)^{\frac{r_2}{n}}.
    \end{equation}
    We have
    \begin{equation}
        \text{Vol}(\beta\mc{B}) = 2^n\frac{\sqrt{\Delta_{\mbb{K}}}}{2^{r_2}}N(\mfk{I}) = 2^n \text{Vol}(\mc{V}_{\Psi(\mfk{I})}).
    \end{equation}
    Hence, from the Minkowski theorem in geometry of numbers, there must exist at least one non-zero lattice point $\boldsymbol\lambda$ inside $\beta\mc{B}$. This implies that the minimum distance of this constellation is upper-bounded as follows,
    \begin{align}\label{eqn:d_s_bound}
        d_{\mc{S}}&\leq\|\boldsymbol\lambda\|  \leq \sqrt{r_1+r_2}\cdot\|\boldsymbol\lambda\|_{\infty}
        \leq \sqrt{r_1+r_2}\cdot\beta \nonumber \\
        &\leq \sqrt{r_1+r_2}\left(\sqrt{|\Delta_{\mbb{K}}|}N(\mfk{I})\right)^{\frac{1}{n}}\left(\frac{2}{\pi}\right)^{\frac{r_2}{n}}.
    \end{align}

In what follows, we use the above properties to prove bounds on the side information gains for the proposed lattice index codes from some particular families of algebraic fields. The first one concerns totally real number fields.
%
%
\begin{theorem}\label{thm:si_gain_real}
    For the proposed lattice coding scheme over a \textit{totally real} number field $\mbb{K}$ with discriminant $\Delta_{\mbb{K}}$, the side information gain provided by $\mc{S}\subset\{1,\ldots,K\}$ can be bounded as follows.
    \begin{equation}
        6 \leq \Gamma(\mc{C},\mc{S}) \leq 6 + \gamma_{\mc{S}}\quad \text{dB/bit/dim},
    \end{equation}
    where $\gamma_{\mc{S}}\defeq \frac{10\log_{10}|\Delta_{\mbb{K}}|}{\sum_{k\in\mc{S}}f_k\log_2 p_k}$. Moreover, when used over the Rayleigh fading network, this scheme provides diversity order $D(\mc{C})=n$ and $d_{p,min}(\mc{C})= 1$. Also, for any $\mc{S}\subset\{1,\ldots,K\}$, we have $D(\mc{C},\mc{S})=n$ and $d_{p,min}(\mc{C},\mc{S})\geq \Pi_{k\in\mc{S}} p_k^{f_k}$ for any $\mc{S}$.
\end{theorem}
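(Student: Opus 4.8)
The plan is to establish the lower bound $\Gamma(\mc{C},\mc{S}) \geq 6$, the upper bound $\Gamma(\mc{C},\mc{S}) \leq 6 + \gamma_{\mc{S}}$, and the Rayleigh-fading claims in turn, all resting on the identification $\mathbf{x} = \Psi(\tilde{x}'_{\mc{S}})$ and the lattice $\Psi(\mfk{I})$ with $\mfk{I} = \Pi_{k\in\mc{S}}\mfk{p}_k$ established just before the statement. For the lower bound I would first record the two anchor quantities: before side information, the code is $\Psi(\Ok)$ (a complete set of coset representatives of $\Ok/\Pi_{k=1}^K\mfk{p}_k$), so $d_0^2$ is controlled by the shortest vector of $\Psi(\Ok)$; after $\mc{S}$ is revealed, the relevant lattice is $\Psi(\mfk{I})$, so $d_{\mc{S}}^2$ is its minimum norm. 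The key arithmetic fact is that for a totally real field the canonical embedding sends $\alpha$ to $(\sigma_1(\alpha),\ldots,\sigma_n(\alpha))\in\R^n$, so $\norm{\Psi(\alpha)}^2 = \sum_i \sigma_i(\alpha)^2 \geq n\bigl(\Pi_i \sigma_i(\alpha)^2\bigr)^{1/n} = n\,|N(\alpha)|^{2/n}$ by AM–GM, and $|N(\alpha)| \geq N(\mfk{I})$ for any nonzero $\alpha \in \mfk{I}$ since $N(\mfk{I}) \mid |N(\alpha)|$ (the ideal $(\alpha)$ is divisible by $\mfk{I}$). Hence $d_{\mc{S}}^2 \geq n\,N(\mfk{I})^{2/n}$, while $d_0^2 \leq$ [some baseline]; combining with $R_{\mc{S}} = \frac{1}{n}\sum_{k\in\mc{S}} f_k\log_2 p_k$ from \eqref{eqn:R_s} (using $N(\mfk{p}_k) = p_k^{f_k}$), I get $10\log_{10}(d_{\mc{S}}^2/d_0^2)/R_{\mc{S}} \geq 10\log_{10}(N(\mfk{I})^{2/n})/R_{\mc{S}} = 10\log_{10} 2 \cdot \frac{2}{n}\log_2 N(\mfk{I}) / R_{\mc{S}} = 6$, once I check that $d_0^2$ is exactly $n$ (the minimum norm of $\Psi(\Ok)$ for the minimum-energy coset representatives being realized by $1 \in \Ok$, giving $\norm{\Psi(1)}^2 = n$) so the inequality is tight in the right place.

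For the upper bound I would invoke the Minkowski-based estimate \eqref{eqn:d_s_bound} already derived in the excerpt: with $r_2 = 0$ and $r_1 = n$ for a totally real field it reads $d_{\mc{S}} \leq \sqrt{n}\,(\sqrt{|\Delta_{\mbb{K}}|}\,N(\mfk{I}))^{1/n}$, so $d_{\mc{S}}^2 \leq n\,|\Delta_{\mbb{K}}|^{1/n} N(\mfk{I})^{2/n}$. Dividing by $d_0^2 = n$ and taking $10\log_{10}$ gives $10\log_{10}(d_{\mc{S}}^2/d_0^2) \leq \frac{10}{n}\log_{10}|\Delta_{\mbb{K}}| + \frac{20}{n}\log_{10} N(\mfk{I})$. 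Dividing through by $R_{\mc{S}} = \frac{1}{n}\sum_{k\in\mc{S}} f_k\log_2 p_k$ and noting $\log_2 N(\mfk{I}) = \sum_{k\in\mc{S}} f_k\log_2 p_k = n R_{\mc{S}}$, the second term contributes exactly $20\log_{10} 2 = 6$ dB/bit/dim and the first contributes $\gamma_{\mc{S}} = 10\log_{10}|\Delta_{\mbb{K}}| / \sum_{k\in\mc{S}} f_k\log_2 p_k$, giving $\Gamma(\mc{C},\mc{S}) \leq 6 + \gamma_{\mc{S}}$ as claimed.

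For the Rayleigh-fading statements I would argue as follows. Since $\mbb{K}$ is totally real of degree $n$, the canonical embedding of any nonzero $\alpha\in\Ok$ has all $n$ coordinates nonzero (no conjugate of a nonzero algebraic integer vanishes), so any two distinct codewords differ in all $n$ components, giving $D(\mc{C}) = n$; the same reasoning applied within $\Psi(\mfk{I})$ gives $D(\mc{C},\mc{S}) = n$. For the minimum product distance, a difference of two codewords is $\Psi(\alpha)$ for some nonzero $\alpha\in\Ok$, and $\Pi_{i=1}^n |\sigma_i(\alpha)| = |N(\alpha)| \geq 1$ since $N(\alpha)\in\Z\setminus\{0\}$, with equality achievable (e.g.\ a unit), so $d_{p,min}(\mc{C}) = 1$. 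When $\mc{S}$ is revealed the difference lies in $\mfk{I}$, so $\alpha\in\mfk{I}\setminus\{0\}$ and then $\mfk{I}\mid(\alpha)$ forces $N(\mfk{I}) \mid |N(\alpha)|$, whence $\Pi_i|\sigma_i(\alpha)| = |N(\alpha)| \geq N(\mfk{I}) = \Pi_{k\in\mc{S}} p_k^{f_k}$, which is the asserted bound $d_{p,min}(\mc{C},\mc{S}) \geq \Pi_{k\in\mc{S}} p_k^{f_k}$.

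I expect the main obstacle to be the careful justification that $d_0^2 = n$ exactly (and more generally pinning down the normalization $\gamma$ or confirming that it is harmless, as promised by the "$\gamma = 1$" convention) and that the minimum-energy coset representatives do not distort the shortest-vector argument — i.e.\ that the relevant minimum distances really are the lattice minima of $\Psi(\Ok)$ and $\Psi(\mfk{I})$ rather than minima over a truncated fundamental-domain constellation. Everything else is the AM–GM inequality, multiplicativity of the ideal norm, divisibility $\mfk{I}\mid(\alpha)\Rightarrow N(\mfk{I})\mid |N(\alpha)|$, and the already-proved Minkowski bound \eqref{eqn:d_s_bound}, all of which are routine; the subtlety is entirely in matching the "before side information" and "after side information" distances to the clean lattice quantities so that the $6$ dB lower bound comes out tight.
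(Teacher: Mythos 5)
Your proposal reproduces the paper's proof essentially line for line: the AM--GM bound $\|\Psi(\alpha)\|^2 \geq n\,|N_{\mbb{K}}(\alpha)|^{2/n} \geq n\,N(\mfk{I})^{2/n}$ for $\alpha\in\mfk{I}$ and the identification $d_0 = \sqrt{n}$ (via $1\in\Ok$) yield the $6$ dB lower bound; the Minkowski estimate \eqref{eqn:d_s_bound} with $(r_1,r_2)=(n,0)$ yields the upper bound $6+\gamma_{\mc{S}}$; and the Rayleigh claims follow from nonvanishing of conjugates of a nonzero algebraic integer (the paper cites \cite{joseph96} for diversity $n$ rather than spelling it out, but the content is the same) together with $N(\mfk{I})\mid |N_{\mbb{K}}(\alpha)|$ for $\alpha\in\mfk{I}$. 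The caveat you raise about whether the truncated minimum-energy coset codebook actually realizes the lattice minima $d_0=\sqrt{n}$ and $d_{\mc{S}}$ is a real subtlety, but the paper glosses over it exactly as you do, so your argument is faithful to the paper's.
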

\begin{proof}
    For a totally real number field $\mbb{K}$, it has the signature $(r_1,r_2)=(n,0)$. Hence, from \eqref{eqn:d_s_bound}, an upper bound on $d_{\mc{S}}$ can be easily shown by plugging in $r_1=n$ and $r_2=0$. In what follows, we use the AM-GM inequality to prove a lower bound on $d_{\mc{S}}$.

    Again, for a general set $\mc{S}\subset \{1,\ldots,K\}$,  CRT guarantees that $\tilde{x}'_{\mc{S}}\in \Pi_{k\in\mc{S}} \mfk{p}_k=\mfk{I}$. For a $\tilde{x}\in\mfk{I}$, after embedding, the squared distance from 0 (it suffices to consider the distance from $\mathbf{0}$ due to the lattice structure) can be bounded as follows,
    \begin{align}
        \|\Psi(\tilde{x})\|^2 &= \sum_{i=1}^n \sigma_i(\tilde{x})^2 \overset{(a)}{\geq} n\left(\Pi_{i=1}^n \sigma_i(\tilde{x})\right)^{\frac{2}{n}} \nonumber \\
        &= nN_{\mbb{K}}(\tilde{x})^{\frac{2}{n}} \geq nN(\mfk{I})^{\frac{2}{n}},
    \end{align}
    where (a) follows from the AM-GM inequality. Therefore, the minimum distance of this constellation can be lower-bounded by
    \begin{equation}\label{eqn:d_s_lower_bound}
        d_{\mc{S}} \geq \sqrt{n}N(\mfk{I})^{1/n}.
    \end{equation}
    Moreover, the above bound also indicates that $d_0\geq\sqrt{n}N(\Ok)^{1/n}=\sqrt{n}$. Combining \eqref{eqn:R_s}, \eqref{eqn:d_s_bound}, \eqref{eqn:d_s_lower_bound}, and the fact that $d_0=\sqrt{n}$ (since $1\in\Ok$) results in
    \begin{equation}\label{eqn:bound_gamma}
        \frac{20\log_{10}N(\mfk{I})}{\log_2 N(\mfk{I})}\leq \Gamma(\mc{C},\mc{S}) \leq \frac{20\log_{10}N(\mfk{I}) + 10\log_{10}|\Delta_{\mbb{K}}|}{\log_2 N(\mfk{I})} ,
    \end{equation}
    and thus
    \begin{equation}
        6\leq \Gamma(\mc{C},\mc{S})\leq 6 + \gamma_{\mc{S}} \quad \text{dB/bit/dim}.
    \end{equation}

    Now, let us consider using this scheme over the Rayleigh fading network. We first note that for the proposed scheme, $\tilde{x}\in\Ok$ and $\mathbf{x}\in\Psi(\tilde{x})$; thus, $\mc{C}\subseteq\Psi(\Ok)$ will have diversity order $D(\mc{C})=n$ for a totally real $\Ok$ \cite{joseph96}. The $n$-product distance of $\mathbf{x}$ from $\mathbf{0}$ is lower-bounded by
    \begin{align}
        d_{p}(\mathbf{x},\mathbf{0}) &= d_{p}(\Psi(\tilde{x}),\mathbf{0})= \Pi_{i=1}^n |\sigma_i(\tilde{x})| \nonumber \\
        &= |N_{\mbb{K}}(\tilde{x})| \geq N(\Ok) \nonumber \\
        &= 1.
    \end{align}
    This is true for any $\mathbf{x}\in\mc{C}$; thus, together with the fact that $d_p(\mathbf{1},\mathbf{0})=1$, we conclude that $d_{p,min}(\mc{C})= 1$.

    For a $\mc{S}\subset\{1,\ldots,K\}$, recall that when given $\mc{S}$, $\tilde{x}'_{\mc{S}}$ belongs to $\Pi_{k\in\mc{S}}\mfk{p}_k$ an ideal of $\Ok$; therefore, the diversity order of $\Psi(\Pi_{k\in\mc{S}}\mfk{p}_k)$ and that of $\Psi(\Ok)$ are the same. i.e., $D(\mc{C},\mc{S})=n$ for a totally real $\Ok$. Moreover, the $n$-product distance of $\mathbf{x}$ from $\mathbf{0}$ is lower-bounded by
    \begin{align}
        d_{p}(\mathbf{x},\mathbf{0}) &= d_{p}(\Psi(\tilde{x}),\mathbf{0})= \Pi_{i=1}^n |\sigma_i(\tilde{x})| \nonumber \\
        &= |N_{\mbb{K}}(\tilde{x})| \geq N(\Pi_{k\in\mc{S}}\mfk{p}_k) \nonumber \\
        &= \Pi_{k\in\mc{S}} p_k^{f_k}.
    \end{align}
    The above bound is true for any $\mathbf{x}\in\mc{C}$ with $w_{\mc{S}}$ fixed; therefore, $d_{p,min}(\mc{C},\mc{S})\geq \Pi_{k\in\mc{S}} p_k^{f_k}$.
\end{proof}
We now provide simulation results with the proposed lattice index code in Example~\ref{exp:real_PID}. We first use this scheme over the AWGN network as shown in Fig~\ref{fig:Qr5_fig_AWGN} where we observe a side information gain of 7 dB and 11 dB at symbol error rate of $10^{-5}$ when revealing $w_1$ and $w_2$ to the receiver, respectively. These correspond to $7/\frac{1}{2}\log_2(5)\approx 6.03$ and $11/\frac{1}{2}\log_2(11)\approx 6.36$ dB/bit/dim where the bounds above are $6\leq \Gamma(\mc{C},\{1\}) \leq 9.01$ and $6\leq \Gamma(\mc{C},\{2\}) \leq 8.02$ dB/bit/dim for revealing $w_1$ and $w_2$, respectively.
\begin{figure}
    \centering
    \includegraphics[width=3.5in]{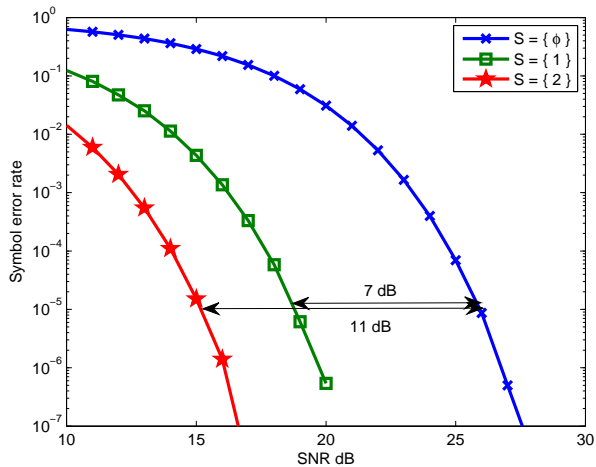}
    \caption{SNR versus symbol error rate over the AWGN network. The scheme is described in Example~\ref{exp:real_PID}.}
    \label{fig:Qr5_fig_AWGN}
\end{figure}

Theorem~\ref{thm:si_gain_real} also indicates that this scheme can provide a diversity gain of $D(\mc{C})=D(\mc{C},\mc{S})=2$ for any $\mc{S}\subset\{1,\ldots,K\}$ when used over the Rayleigh fading network. We examine this in Fig.~\ref{fig:Qr5_fig_ray} where we adopt the same scheme and use it over the Rayleigh fading network. One observes that as shown in Theorem~\ref{thm:si_gain_real}, the proposed lattice index code has diversity order 2 for any $\mc{S}$. Moreover, we point out that, compared with the results in the AWGN network, the side information gains increase from 7 dB to 8.5 dB for $S=\{1\}$ and from 11 dB to 13 dB for $S=\{2\}$. This is a consequence of having increased $n$-minimum product distances when having receiver message side information as shown in Theorem~\ref{thm:si_gain_real}.
\begin{figure}
    \centering
    \includegraphics[width=3.5in]{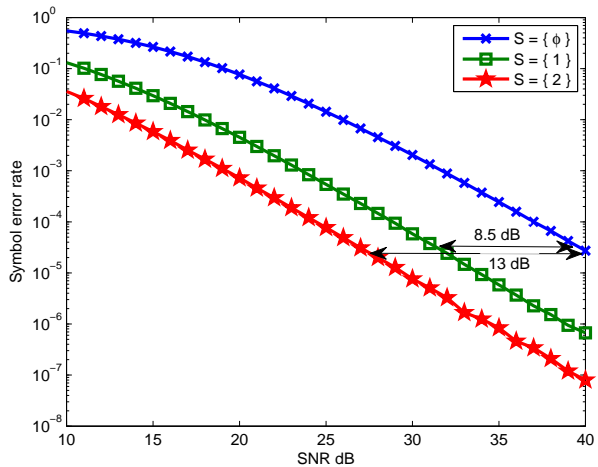}
    \caption{SNR versus symbol error rate over the Rayleigh fading network. The scheme is described in Example~\ref{exp:real_PID}.}
    \label{fig:Qr5_fig_ray}
\end{figure}

The following theorem considers totally complex number fields.
\begin{theorem}\label{thm:si_gain_complex}
    For the proposed lattice coding scheme over a \textit{totally complex} number field $\mbb{K}$ with discriminant $\Delta_{\mbb{K}}$, the side information gain provided by $\mc{S}\subset\{1,\ldots,K\}$ can be bounded as follows.
    \begin{equation}
        6 \leq \Gamma(\mc{C},\mc{S}) \leq 6 + \tilde{\gamma}_{\mc{S}} \quad \text{dB/bit/dim},
    \end{equation}
    where now $\tilde{\gamma}_{\mc{S}}\defeq \frac{10\log_{10}|\Delta_{\mbb{K}}|\left(\frac{2}{\pi}\right)^n}{\sum_{k\in\mc{S}}f_k\log_2 p_k}$. When used over the Rayleigh fading network, this scheme provides diversity order $D(\mc{C})=n/2$. Moreover, for any $\mc{S}\subset\{1,\ldots,K\}$, we have $D(\mc{C},\mc{S})=n/2$.
\end{theorem}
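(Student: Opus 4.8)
The plan is to follow the proof of Theorem~\ref{thm:si_gain_real} almost verbatim, replacing the totally real signature $(n,0)$ by the totally complex signature $(r_1,r_2)=(0,n/2)$. The upper bound on $d_{\mc{S}}$ then falls straight out of the Minkowski bound \eqref{eqn:d_s_bound}: setting $r_1=0$, $r_2=n/2$ gives $d_{\mc{S}}\leq\sqrt{n/2}\,\big(\sqrt{|\Delta_{\mbb{K}}|}N(\mfk{I})\big)^{1/n}(2/\pi)^{1/2}$, and the surviving factor $(2/\pi)^{r_2/n}=(2/\pi)^{1/2}$ is exactly what will later repackage the discriminant contribution as $|\Delta_{\mbb{K}}|(2/\pi)^n$.

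For the matching lower bound I would rerun the AM--GM step over the $n/2$ complex embeddings. CRT again places $\tilde{x}'_{\mc{S}}$ in $\mfk{I}=\Pi_{k\in\mc{S}}\mfk{p}_k$, and for a nonzero $\tilde{x}\in\mfk{I}$ one has $\|\Psi(\tilde{x})\|^2=\sum_{i=1}^{n/2}|\sigma_i(\tilde{x})|^2\geq\frac{n}{2}\big(\Pi_{i=1}^{n/2}|\sigma_i(\tilde{x})|^2\big)^{2/n}$; the identity to invoke is that taking one embedding from each complex-conjugate pair and multiplying the squared moduli recovers the field norm, $\Pi_{i=1}^{n/2}|\sigma_i(\tilde{x})|^2=|N_{\mbb{K}}(\tilde{x})|\geq N(\mfk{I})$, whence $d_{\mc{S}}\geq\sqrt{n/2}\,N(\mfk{I})^{1/n}$. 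Taking $\mfk{I}=\Ok$ and noting $\Psi(1)=(1,\dots,1)$ shows $d_0=\sqrt{n/2}$ exactly. Feeding $d_0$, both bounds on $d_{\mc{S}}$, and $R_{\mc{S}}=\frac1n\log_2 N(\mfk{I})=\frac1n\sum_{k\in\mc{S}}f_k\log_2 p_k$ (from \eqref{eqn:R_s}) into the definition of $\Gamma(\mc{C},\mc{S})$, the $N(\mfk{I})$-terms cancel, leaving $20\log_{10}2\approx6$ in the lower bound and $6+\tilde{\gamma}_{\mc{S}}$ in the upper bound.

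For the fading statements I would copy the reasoning in Theorem~\ref{thm:si_gain_real}: any nonzero element of $\Ok$ --- more generally of the ideal $\Pi_{k\in\mc{S}}\mfk{p}_k$ --- has nonzero norm and therefore nonzero image under every complex embedding, so viewing $\mathbf{x}$ as a vector in $\mathbb{C}^{n/2}$ every pair of distinct codewords differs in all $n/2$ complex coordinates; since that is also the number of coordinates available, $D(\mc{C})=D(\mc{C},\mc{S})=n/2$. I do not expect a genuine obstacle here --- the whole argument is structurally the totally real proof with a different signature --- so the only thing to watch is the constant bookkeeping: correctly carrying the $2^{r_2}$ in $\mathrm{Vol}(\mc{V}_{\Psi(\mfk{I})})$ and the $(2/\pi)^{r_2/n}$ factor through the exponents so the discriminant term comes out as $|\Delta_{\mbb{K}}|(2/\pi)^n$ rather than with a stray power of $2/\pi$, and justifying $\Pi_{i=1}^{n/2}|\sigma_i(\tilde{x})|^2=|N_{\mbb{K}}(\tilde{x})|$ from the $n/2$ conjugate-pair structure of the embeddings.
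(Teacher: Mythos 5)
Your proposal is correct and follows essentially the same route as the paper's proof: the upper bound is read off from the Minkowski bound \eqref{eqn:d_s_bound} with $(r_1,r_2)=(0,n/2)$, the lower bound comes from AM--GM applied to the embedded squared moduli together with $\Pi_{i=1}^{n/2}|\sigma_i(\tilde{x})|^2=N_{\mbb{K}}(\tilde{x})\geq N(\mfk{I})$, and the diversity statements follow because $\tilde{x}'_{\mc{S}}$ sits in an ideal of a totally complex $\Ok$. The only cosmetic difference is that you apply AM--GM directly to the $n/2$ terms $|\sigma_i(\tilde{x})|^2$, whereas the paper first writes $\sum_{i=1}^{n/2}|\sigma_i|^2=\tfrac{1}{2}\sum_{i=1}^n|\sigma_i|^2$ via conjugate pairs and then applies AM--GM over all $n$ terms; the two manipulations are algebraically identical.
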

\begin{proof}
    For a totally complex number field $\mbb{K}$, it has the signature $(r_1,r_2)=(0,n/2)$. Hence, from \eqref{eqn:d_s_bound}, an upper bound on $d_{\mc{S}}$ can be easily shown by plugging in $r_1=0$ and $r_2=n/2$. In what follows, we again use the AM-GM inequality to prove a lower bound on $d_{\mc{S}}$.

    For a $x\in\mfk{I}$, after embedding, the squared distance from $\mathbf{0}$ can be bounded as follows
    \begin{align}
        \|\Psi(\tilde{x})\|^2 &= \sum_{i=1}^{n/2} |\sigma_i(\tilde{x})|^2 \overset{(a)}{=} \frac{1}{2}\sum_{i=1}^n|\sigma_i(\tilde{x})|^2 \nonumber \\
        &\overset{(b)}{\geq} \frac{n}{2}\left(\Pi_{i=1}^n |\sigma_i(\tilde{x})|\right)^{\frac{2}{n}} \nonumber \\
        &= \frac{n}{2}N_{\mbb{K}}(\tilde{x})^{\frac{2}{n}} \geq \frac{n}{2}N(\mfk{I})^{\frac{2}{n}},
    \end{align}
    where (a) is due to the fact that $\sigma_i=\bar{\sigma}_{\frac{n}{2}+i}$ and (b) follows from the AM-GM inequality. Therefore, the minimum distance of this constellation can be lower-bounded by
    \begin{equation}\label{eqn:d_s_lower_bound_comp}
        d_{\mc{S}} \geq \sqrt{\frac{n}{2}}N(\mfk{I})^{1/n}.
    \end{equation}
    Moreover, the above bound also indicates that $d_0\geq\sqrt{\frac{n}{2}}N(\Ok)^{1/n}=\sqrt{\frac{n}{2}}$. Combining \eqref{eqn:R_s}, \eqref{eqn:d_s_bound}, \eqref{eqn:d_s_lower_bound_comp}, and the fact that $d_0=\sqrt{\frac{n}{2}}$ ($1\in\Ok$) results in
    \begin{align}\label{eqn:bound_gamma}
        \frac{20\log_{10}N(\mfk{I})}{\log_2 N(\mfk{I})} &\leq \Gamma(\mc{C},\mc{S}) \nonumber \\
        &\hspace{-3pt}\leq \frac{20\log_{10}N(\mfk{I}) + 10\log_{10}|\Delta_{\mbb{K}}|\left(\frac{2}{\pi}\right)^n}{\log_2 N(\mfk{I})} ,
    \end{align}
    and thus
    \begin{equation}
        6\leq \Gamma(\mc{C},\mc{S})\leq 6 + \tilde{\gamma}_{\mc{S}}\quad \text{dB/bit/dim}.
    \end{equation}

    Let us now consider using this scheme over the Rayleigh fading network. Again, we have that every $\tilde{x}\in\Ok$ and $\tilde{x}'_{\mc{S}}$ belongs to $\Pi_{k\in\mc{S}}\mfk{p}_k$ an ideal of $\Ok$ for any $\mc{S}\subset\{1,\ldots,K\}$; therefore, the diversity order of $\mc{C}$ and $\mc{C}$ with $w_{\mc{S}}$ fixed are the same. We then have $D(\mc{C})=n/2$ and $D(\mc{C},\mc{S})=n/2$ for a totally complex $\Ok$ \cite{joseph96}. On the other hand, for a general $\Ok$ totally complex, there is not much we can say about the minimum product distance.
\end{proof}
One example of this class is shown in Fig.~\ref{fig:Qn5_fig} where we plot SNR versus symbol error rate for using the proposed scheme in Example~\ref{exp:noPID} over the AWGN network. One observes that at symbol error rate of $10^{-5}$, revealing either $w_1$ or $w_2$ results in a roughly 12 dB side information gain, which corresponds to $12/\frac{1}{2}\log_2(7)\approx 8.55$ dB/bit/dim. For the considered parameters, the above theorem says that $6 \leq \Gamma(\mc{C},\mc{S}) \leq 9.237$ dB/bit/dim.

\begin{figure}
    \centering
    \includegraphics[width=3.5in]{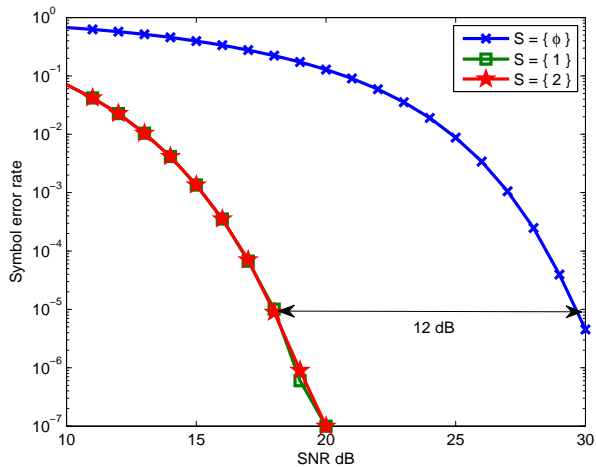}
    \caption{SNR versus symbol error rate over the AWGN network. The scheme is described in Example~\ref{exp:noPID}.}
    \label{fig:Qn5_fig}
\end{figure}

From Theorem~\ref{thm:si_gain_real} and Theorem~\ref{thm:si_gain_complex}, one observes that for the lattice index code constructed over either a totally real number field or a totally complex number field, the gap between upper and lower bounds vanishes as $N(\mfk{p}_k)$ tend to infinity. We therefore have the following corollary.
\begin{corollary}
    For the proposed lattice index code constructed over a totally real number field and that constructed over a totally complex number field, the side information is \textit{asymptotically} uniform in $N(\mfk{p}_k)$. i.e., for any $\mc{S}\in\{1,\ldots,K\}$, $\Gamma(\mc{C})\rightarrow 6$ dB as $N(\mfk{p}_k)\rightarrow \infty,~\forall\{1,\ldots,K\}$.
\end{corollary}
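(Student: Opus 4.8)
The plan is to read the corollary off directly from the sandwich bounds established in Theorem~\ref{thm:si_gain_real} and Theorem~\ref{thm:si_gain_complex}. First I would fix the number field $\mbb{K}$ (either totally real or totally complex), so that its degree $n$ and discriminant $\Delta_{\mbb{K}}$ are constants, and note that within a fixed field one can indeed choose prime ideals $\mfk{p}_k$ of arbitrarily large norm: there are rational primes of unbounded size, and each lies below a prime ideal of norm at least that prime. Thus the regime ``$N(\mfk{p}_k)\to\infty$ for all $k$'' is meaningful with $\mbb{K}$, and hence $\Delta_{\mbb{K}}$ and $n$, held fixed.

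Next I would rewrite the gap terms $\gamma_{\mc{S}}$ and $\tilde\gamma_{\mc{S}}$ in terms of the ideal norms. Since $N(\mfk{p}_k)=p_k^{f_k}$, the (positive) denominator appearing in both is $\sum_{k\in\mc{S}}f_k\log_2 p_k=\log_2\Pi_{k\in\mc{S}}N(\mfk{p}_k)=\log_2 N(\mfk{I})$. For a nonempty $\mc{S}$, as soon as one of the $N(\mfk{p}_k)\to\infty$ this denominator diverges, whereas the numerators $10\log_{10}|\Delta_{\mbb{K}}|$ in the totally real case and $10\log_{10}\bigl(|\Delta_{\mbb{K}}|(2/\pi)^n\bigr)$ in the totally complex case are fixed constants. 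Hence $\gamma_{\mc{S}}\to 0$ and $\tilde\gamma_{\mc{S}}\to 0$.

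A squeeze argument then finishes each fixed $\mc{S}$: from $6\le\Gamma(\mc{C},\mc{S})\le 6+\gamma_{\mc{S}}$ in the real case, and $6\le\Gamma(\mc{C},\mc{S})\le 6+\tilde\gamma_{\mc{S}}$ in the complex case, we get $\Gamma(\mc{C},\mc{S})\to 6$. Finally, since $\Gamma(\mc{C})=\min_{\mc{S}_l}\Gamma(\mc{C},\mc{S}_l)$ is the minimum over the finitely many receiver side-information sets $\mc{S}_1,\ldots,\mc{S}_L$, and the minimum of finitely many quantities, each bounded below by $6$ and each converging to $6$, also converges to $6$, we conclude $\Gamma(\mc{C})\to 6$ dB.

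There is no genuine obstacle here: the entire analytic content already resides in Theorems~\ref{thm:si_gain_real} and~\ref{thm:si_gain_complex}, and the corollary is just the observation that, for a fixed field, the only term of the upper bound exceeding $6$ is an $O(1/\log_2 N(\mfk{I}))$ quantity controlled by the constant discriminant. The only point I would be careful to state explicitly is that $\mbb{K}$ must be held fixed while the ideal norms grow; if the field were allowed to vary, $|\Delta_{\mbb{K}}|$ could itself grow with $N(\mfk{I})$, and the conclusion would instead require the hypothesis $\log|\Delta_{\mbb{K}}|=o(\log N(\mfk{I}))$.
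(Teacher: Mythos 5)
Your proposal is correct and matches the paper's reasoning exactly: the paper states the corollary as an immediate observation that the gap terms $\gamma_{\mc{S}}$ and $\tilde\gamma_{\mc{S}}$ from Theorems~\ref{thm:si_gain_real} and~\ref{thm:si_gain_complex} vanish as $N(\mfk{p}_k)\to\infty$, and you have simply spelled out the squeeze argument and the passage to the finite minimum that the paper leaves implicit. Your extra caveat about holding $\mbb{K}$ fixed is a sensible clarification that the paper takes for granted.
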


For imaginary quadratic integers that happen to be PID, we show that an exactly uniform side information gain of 6 dB/bit/dim can be attained in the following theorem.
\begin{theorem}\label{thm:si_gain_Q}
    Let $\mbb{K}=\mbb{Q}(\sqrt{d})$ an imaginary quadratic field with $d<0$ square-free integer whose ring of integers $\Ok$ happens to be a PID. i.e., $d\in\{-1,-2,-3,-7,-11,-19,-43,-67,-163\}$. For every $\mc{S}\subset \{1,\ldots,K\}$, we have $\Gamma(\mc{C},\mc{S})=6$ dB/bit/dim.
\end{theorem}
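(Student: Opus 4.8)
The plan is to exploit a feature absent from Theorems~\ref{thm:si_gain_real} and~\ref{thm:si_gain_complex}: the PID hypothesis forces every ideal in the construction to be \emph{principal}, so the ideal lattice $\Psi(\mfk{I})$ is not merely sandwiched between the AM--GM and Minkowski estimates but is literally a scaled, rotated copy of the base lattice $\Psi(\Ok)$. Fix $\emptyset\ne\mc{S}\subsetneq\{1,\ldots,K\}$ and write $\mfk{I}=\Pi_{k\in\mc{S}}\mfk{p}_k=\alpha\Ok$ for a generator $\alpha\in\Ok$. Since an imaginary quadratic field is totally complex with signature $(0,1)$, the canonical embedding is a single complex place, $\Psi(\tilde x)=\sigma_1(\tilde x)\in\mbb{C}$ (as in Example~\ref{exp:PID}), and multiplication by $\alpha$ on $\Ok$ becomes multiplication by the complex scalar $\sigma_1(\alpha)$ on $\mbb{C}$ --- a similarity of $\mbb{R}^2$ with ratio $|\sigma_1(\alpha)|$, where $|\sigma_1(\alpha)|^2=\sigma_1(\alpha)\overline{\sigma_1(\alpha)}=\sigma_1(\alpha)\sigma_2(\alpha)=N_{\mbb{K}}(\alpha)=N(\alpha\Ok)=N(\mfk{I})$.

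Everything then follows from $\Psi(\mfk{I})=\sigma_1(\alpha)\,\Psi(\Ok)$. I would first record $d_0=\lambda_1\!\big(\Psi(\Ok)\big)=1$ (valid since $1\in\Ok$ and $|\sigma_1(\tilde x)|^2=N_{\mbb K}(\tilde x)\ge 1$ for every nonzero $\tilde x\in\Ok$), and then argue $d_{\mc{S}}=\sqrt{N(\mfk{I})}\,d_0$. The lower bound $d_{\mc{S}}\ge\lambda_1(\Psi(\mfk{I}))=\sqrt{N(\mfk I)}\,d_0$ is the usual argument: differences of distinct codewords of the revealed constellation are nonzero vectors of $\Psi(\mfk{I})$, exactly as in the proof of Theorem~\ref{thm:si_gain_real}. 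For the matching upper bound, take a shortest vector $\lambda^\ast$ of $\Psi(\mfk{I})$; because $\Pi_{k=1}^K\mfk{p}_k=\mfk{I}\cdot\Pi_{k\notin\mc{S}}\mfk{p}_k$ is a \emph{proper} sub-ideal of $\mfk{I}$ its lattice has strictly larger minimum distance (by the same similarity, with an extra factor $\sqrt{N(\Pi_{k\notin\mc{S}}\mfk{p}_k)}>1$), so $\lambda^\ast\notin\Psi(\Pi_{k=1}^K\mfk{p}_k)$; the minimum-energy representative of the coset $\lambda^\ast+\Pi_{k=1}^K\mfk{p}_k$ is therefore a nonzero vector of $\Psi(\mfk{I})$ of norm at most $\|\lambda^\ast\|$, hence of norm exactly $\lambda_1(\Psi(\mfk{I}))$, and it lies at that distance from the zero codeword. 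Thus $d_{\mc{S}}^2/d_0^2=N(\mfk{I})$.

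Substituting into the definitions finishes it. By~\eqref{eqn:R_s}, $R_{\mc{S}}=\tfrac1n\log_2 N(\mfk{I})=\tfrac12\log_2 N(\mfk{I})=\tfrac12\sum_{k\in\mc{S}}f_k\log_2 p_k$, so
\[
\Gamma(\mc{C},\mc{S})=\frac{10\log_{10}\!\big(d_{\mc S}^2/d_0^2\big)}{R_{\mc{S}}}=\frac{10\log_{10}N(\mfk{I})}{\tfrac12\log_2 N(\mfk{I})}=20\log_{10}2,
\]
which is the quantity written ``$6$~dB/bit/dim'' throughout the paper, and is manifestly independent of $\mc{S}$. Equivalently, one may quote Theorem~\ref{thm:si_gain_complex} for $\Gamma(\mc{C},\mc{S})\ge 6$ and use only the short-vector argument above for the reverse inequality $\Gamma(\mc{C},\mc{S})\le 6$; the point is that here the Minkowski bound~\eqref{eqn:d_s_bound} is not tight but is superseded by the exact identity $d_{\mc S}=\sqrt{N(\mfk I)}\,d_0$.

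I expect the one place needing care to be the second paragraph --- making the passage from the lattices $\Psi(\Ok)$, $\Psi(\mfk{I})$ to the actual finite minimum-energy constellations fully rigorous, and handling or explicitly excluding the degenerate cases $\mc{S}=\emptyset$ and $\mc{S}=\{1,\ldots,K\}$ (the latter gives a single-point revealed constellation). The arithmetic heart --- that principality makes the ideal lattice a genuine similar copy of $\Psi(\Ok)$ with scale factor $\sqrt{N(\mfk I)}$ --- is immediate, since over an imaginary quadratic field the norm form is exactly the squared modulus.
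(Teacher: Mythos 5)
Your proposal is correct and takes essentially the paper's approach: both proofs exploit that over a PID, $\mfk{I}=\alpha\Ok$ is principal, so under the single complex embedding $\Psi=\sigma_1$ the ideal lattice $\Psi(\mfk{I})=\sigma_1(\alpha)\Psi(\Ok)$ is an exact similar copy of $\Psi(\Ok)$ with ratio $|\sigma_1(\alpha)|=\sqrt{N(\mfk{I})}$, forcing $d_{\mc S}^2/d_0^2=N(\mfk{I})$ and hence $\Gamma(\mc{C},\mc{S})=20\log_{10}2$. The paper simply computes $\|\Psi(\phi\tilde x_1)-\Psi(\phi\tilde x_2)\|^2=N(\phi)\,\|\Psi(\tilde x_1)-\Psi(\tilde x_2)\|^2$ and works directly with the lattice minimum distance, leaving implicit the passage to the finite minimum-energy constellation and the degenerate cases $\mc{S}=\emptyset$, $\mc{S}=\{1,\ldots,K\}$ that you flag and treat in your second paragraph --- a modest refinement rather than a different route.
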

Before we prove this theorem, we note that imaginary quadratic fields are totally complex with signature $(0,1)$. For quadratic integers, $\Delta_{\mbb{K}}=4d$ if $d\equiv 2,3\hspace{-3pt}\mod 4$ and $\Delta_{\mbb{K}}=d$ if $d\equiv 1\hspace{-3pt}\mod 4$. Hence, the smallest gap between the bounds in Theorem~\ref{thm:si_gain_complex} happens when $d=-3$ and is roughly $0.884/\log_2 N(\mfk{I})$. Theorem~\ref{thm:si_gain_Q} shows that for some of these rings, one can further close the gap and achieve 6 dB/bit/dim exactly.
\begin{proof}
    Since $\Ok$ is a PID, every ideal is generated by a singleton in $\Ok$; specifically, $\mfk{I} = \phi\Ok$ for some $\phi\in\Ok$. Moreover, an imaginary quadratic field is a totally complex number field with degree 2; hence, $(r_1,r_2)=(0,1)$ and $\Psi(.)=\sigma_1(.)$. Let $\tilde{x}_1, \tilde{x}_2\in \Ok$ whose squared distance is $\|\Psi(\tilde{x}_1)-\Psi(\tilde{x}_2)\|^2$. One has that $\Psi(\phi \tilde{x}_1), \Psi(\phi \tilde{x}_2)\in \Psi(\phi\Ok)$ with squared distance
    \begin{align}
        d^2 &= \| \Psi(\phi \tilde{x}_1) - \Psi(\phi \tilde{x}_2) \|^2 \nonumber \\
        &= \| \sigma_1(\phi \tilde{x}_1) - \sigma_1(\phi \tilde{x}_2) \|^2 \nonumber \\
        &\overset{(a)}{=} \| \sigma_1(\phi)\sigma_1(\tilde{x}_1) - \sigma_1(\phi)\sigma_1(\tilde{x}_2) \|^2 \nonumber \\
        &\overset{(b)}{=} \|\sigma_1(\phi)\|^2 \|\sigma_1(\tilde{x}_1)-\sigma_1(\tilde{x}_2)\|^2 \nonumber \\
        &= N(\phi) \|\Psi(\tilde{x}_1)-\Psi(\tilde{x}_2)\|^2,
    \end{align}
    where (a) follows from the fact that $\sigma_1$ is a homomorphism and (b) is because Euclidean norm is multiplicative.

    Now, pick $\tilde{x}_1,\tilde{x}_2\in\Ok$ having the minimum distance. i.e., $\|\Psi(\tilde{x}_1)-\Psi(\tilde{x}_2)\|=d_0$. The squared distance between $\phi \tilde{x}_1$ and $\phi \tilde{x}_2$ will be $d^2=N(\phi) d_0^2$. Note that this is the minimum one in $\phi \Ok$ because any other pair $\phi \tilde{x}_1'$ and $\phi \tilde{x}_2'$ has
    \begin{align}
        \| \Psi(\phi \tilde{x}_1') - \Psi(\phi \tilde{x}_2') \|^2 &= N(\phi)\|\Psi(\tilde{x}_1')-\Psi(\tilde{x}_2')\|^2 \nonumber \\
        &\geq N(\phi) d_0^2.
    \end{align}

    Therefore,
    \begin{equation}
        \Gamma(\mc{C},\mc{S}) = \frac{20\log_{10}N(\phi)}{\log_2 N(\phi)}=6\quad \text{dB/bit/dim}.
    \end{equation}
\end{proof}
An example of this kind can be found in Fig.~\ref{fig:Qn7_fig} where we plot SNR versus symbol error rate for using the scheme in Example~\ref{exp:PID} over the AWGN network. One observes that at symbol error rate of $10^{-5}$, revealing messages $w_1$ and $w_2$ provides SNR gains of roughly 8.5 dB and $10.5$ dB, respectively. Hence, the side information gains of revealing $w_1$ and $w_2$ are roughly $8.5/\frac{1}{2}\log_2(7)\approx 6.055$ and $10.5/\frac{1}{2}\log_2(11)\approx 6.07$ dB/bit/dim, respectively.

\begin{figure}
    \centering
    \includegraphics[width=3.5in]{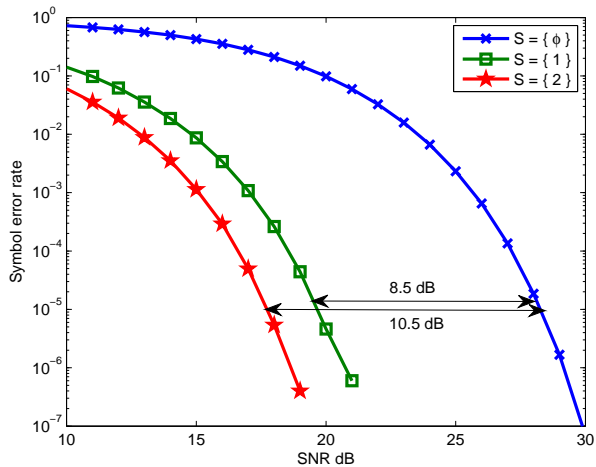}
    \caption{SNR versus symbol error rate over the AWGN network. The scheme is described in Example~\ref{exp:PID}.}
    \label{fig:Qn7_fig}
\end{figure}

\begin{remark}
    Since $\Zi$ and $\Zw$ are imaginary quadratic integers with $d=-1$ and $d=-3$, respectively, the lattice coding scheme in \cite{viterbo14index} is subsumed as a special case of the proposed scheme and Theorem~\ref{thm:si_gain_Q} partially recovers the results in \cite{viterbo14index}. The proposed scheme is in fact a generalization of their scheme to algebraic number fields and substantially expands the design space. On the other hand, there is another construction in \cite{viterbo14index} using Hurwitz integers (a ring of integers of Hamilton's quaternions) which form a non-commutative PID. The construction proposed in this paper only considers commutative rings $\Ok$ and hence does not contain this construction.
\end{remark}

\section{Design Examples}\label{sec:design}
In this section, we study some interesting designs of the proposed lattice index coding scheme. Our examples are motivated by an observation made in \cite{viterbo14index} that the scheme in \cite{viterbo14index} will have messages $w_k$ from different fields. However, in many applications, one would like the messages to be from the same fields.
To relax this, we first observe that the proposed lattice index code only hinges on the CRT which requires the corresponding prime ideals to be relatively prime. However, unlike in $\mbb{Z}$, in a general ring of algebraic integers, two prime ideals are relatively prime does not necessarily mean that they are lying above different prime numbers. Specifically, let $p$ be a prime, designs in this section consider the prime decomposition of the principal ideal $p\Ok$ where all the prime factors are relatively prime. One can then have messages all from $\mbb{F}_p$ and use the ring isomorphism to map symbols to the constellation $\Ok/p\Ok$.

To allow a fixed number of messages $K$ from the same field $\mbb{F}_p$ in our scheme, what we need is a number field $\mbb{K}$ in which $p$ splits into at least $K$ prime ideals.
Note that for a $\mbb{K}$ having degree $n$, a nature prime $p$ splits into at most $n$ prime ideals. From now on, we particularly look at $K=n$ and look for a $p$ that splits completely in $\mbb{K}$.

The first example focuses on the quadratic fields. The second one considers the cyclotomic extension in which certain primes split completely for a fixed $n$. The third design example considers the maximal totally real subfield of a cyclotomic field where again one can easily determine completely splitting primes. For the background knowledge of cyclotomic extensions and their maximal totally real subfields, the reader is referred to standard textbooks such as \cite{washington97}. For the last one, we borrow the existence result in \cite{guruswami03} which shows that there exists a number field having an infinite Hilbert class field tower in which some primes split completely all the way up the tower.

It is worth noting that very recently, Natarajan \textit{et al.} proposed in \cite{viterbo15_index_QAM} another class of codes for the same problem studied here. The codes therein use quadrature amplitude modulation (QAM) and off-the-shelf LDPC codes to achieve large side information gains for $K\leq 5$. This result is of practical interest as each message now can be from a (possibly the same) Galois field with the size a power of 2. However, these index codes are obtained via exhaustive search and hence may not be easy to generalize to any $K$. Moreover, no bounds on side information are shown in \cite{viterbo15_index_QAM}.

\subsection{Quadratic Fields}
A quadratic field is an algebraic number field $\mbb{K}$ of degree $n=2$ over $\mbb{Q}$. Particularly, one may write $\mbb{K}=\mbb{Q}(\sqrt{d})$ where $d\in\mbb{Z}$ is square free. We say $\mbb{K}$ is an imaginary quadratic field if $d<0$ and a real quadratic feild if $d>0$. Let $\mbb{K}=\mbb{Q}(\sqrt{d})$, one has its ring of integers $\mfk{O}_{\mbb{K}}=\mbb{Z}[\xi]$ given by
\begin{equation}
    \xi = \left\{\begin{array}{ll}
    \sqrt{d},                                           & d\equiv 2,3\mod 4, \\
    \frac{1+\sqrt{d}}{2},                                    & d\equiv 1\mod 4.\\
    \end{array} \right.
\end{equation}
Examples~\ref{exp:real_PID},~\ref{exp:noPID}, and \ref{exp:PID} are instances of such $\Ok$. Also, $\Delta_\mbb{K}=4d$ if $d\equiv 2,3\mod 4$ and $\Delta_\mbb{K}=d$ if $d\equiv 1\mod 4$. In general, Theorem~\ref{thm:si_gain_real} or Theorem~\ref{thm:si_gain_complex} can be applied depends on the sign of $d$. It is well-known that there are 9 such $\Ok$ are PIDs (corresponding to $d\in\{-1,-2,-3,-7,-11,-19,-43,-67,-163\}$) for which Theorem~\ref{thm:si_gain_Q} guarantees an uniform side information gain. Moreover, observe that when $d=-1$ we have the Gaussian integers and when $d=-3$ we have the Eisenstein integers. Hence, this design example partially subsumes the results in \cite{viterbo14index} as special cases.

\begin{remark}\label{rmk:best_quad}
    Note that among the rings of quadratic integers, $\Zw$ will in general provide the best performance when used over the AWGN network as it is the best packing in $\mbb{R}^2$ \cite{conway1999sphere}. Hence it seems that there's no need to pursue other rings in this class. However, having these rings in the repository can still be very useful because 1) prime numbers behave differently in different rings, 2) when used over the Rayleigh fading network, a ring of real quadratic integers would provide a diversity gain of 2, and 3) in some scenarios, with limited feedback, one may choose rings other than $\Zw$ according to the feedback to achieve better performance (one of such a possibility has been discussed in the context of function computation \cite{huang15_ACF}).
\end{remark}

The following two Lemma (whose proofs can be found in standard textbook of algebraic number theory) allow us to efficiently categorize the behavior of primes and the corresponding prime ideals in an quadratic field.

\begin{lemma}\label{lma:prime_category}
    Let $p$ be a rational prime. For a quadratic field $\mbb{K}$, one has
    \begin{itemize}
        \item if $\left(\frac{\Delta_{\mbb{K}}}{p}\right)=0$, then $p$ ramifies in $\Ok$,
        \item if $\left(\frac{\Delta_{\mbb{K}}}{p}\right)=1$, then $p$ splits in $\Ok$,
        \item if $\left(\frac{\Delta_{\mbb{K}}}{p}\right)=-1$, then $p$ remains inert in $\Ok$,
    \end{itemize}
    where $\left(\frac{\Delta_{\mbb{K}}}{p}\right)$ is the Kronecker symbol $\hspace{-3pt}\mod p$. Moreover, the Kronecker symbol $\hspace{-3pt}\mod p$ operation can be efficiently computed. (See for example \cite[Algorithm 1.4.10]{Cohen93}.)
\end{lemma}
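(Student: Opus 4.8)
The statement is classical, and the plan is to derive it from Dedekind's theorem on the factorization of rational primes, applied to the explicit description $\Ok=\mbb{Z}[\xi]$ recalled just above the lemma. First I would record the minimal polynomial of $\xi$ over $\mbb{Q}$: it is $m(x)=x^2-d$ when $d\equiv 2,3\bmod 4$, and $m(x)=x^2-x+\frac{1-d}{4}$ when $d\equiv 1\bmod 4$. Since $\Ok=\mbb{Z}[\xi]$, the index $[\Ok:\mbb{Z}[\xi]]$ is $1$, so Dedekind's theorem applies to \emph{every} rational prime $p$: the factorization type of $p\Ok$ matches the factorization type of $m(x)$ modulo $p$ into irreducibles over $\mbb{F}_p$. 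In the quadratic case this gives exactly three possibilities --- $m(x)$ has two distinct roots in $\mbb{F}_p$ (then $p=\mfk{p}\mfk{p}'$ splits), $m(x)$ is irreducible over $\mbb{F}_p$ (then $p$ is inert with residue degree $2$), or $m(x)$ is a perfect square in $\mbb{F}_p[x]$ (then $p\Ok=\mfk{p}^2$ ramifies).

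The next step is to translate these three cases into the value of the Kronecker symbol $\left(\frac{\Delta_{\mbb{K}}}{p}\right)$. For an odd prime $p$, completing the square in $m(x)$ shows that the number of roots in $\mbb{F}_p$ is governed by whether the discriminant of $m$, which is $4d$ in the first case and $d$ in the second, is a nonzero square, a nonsquare, or zero modulo $p$; in both cases this discriminant is exactly $\Delta_{\mbb{K}}$, and for odd $p$ the Kronecker symbol reduces to the Legendre symbol $\left(\frac{\Delta_{\mbb{K}}}{p}\right)$. Hence for odd $p$ the three cases correspond precisely to the symbol being $1$, $-1$, or $0$, and the latter happens exactly when $p\mid\Delta_{\mbb{K}}$, i.e. when $p$ ramifies.

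The delicate point --- and the step I expect to require the most care --- is the prime $p=2$, where the Legendre symbol is undefined and the form of $\Ok$ depends on $d\bmod 4$. Here one checks directly: when $d\equiv 1\bmod 8$ the polynomial $x^2-x+\frac{1-d}{4}$ splits mod $2$ (so $2$ splits), when $d\equiv 5\bmod 8$ it is irreducible mod $2$ (so $2$ is inert), and when $d\equiv 2,3\bmod 4$ one has $2\mid\Delta_{\mbb{K}}$ and $x^2-d\equiv (x+d)^2\bmod 2$ up to the even part, so $2$ ramifies. One then verifies case-by-case that these three outcomes match $\left(\frac{\Delta_{\mbb{K}}}{2}\right)=1,-1,0$ under the standard definition of the Kronecker symbol at $2$ (namely $\left(\frac{a}{2}\right)$ is $0,1,-1$ according as $a\equiv 0,\ \pm1,\ \pm3 \bmod 8$), using $\Delta_{\mbb{K}}=d$ or $4d$ as appropriate. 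Finally, the computational remark follows from the fact that the Kronecker symbol inherits multiplicativity and a reciprocity law from the Jacobi symbol, giving a Euclidean-style algorithm of the same complexity as $\gcd$; I would simply cite \cite[Algorithm 1.4.10]{Cohen93} for this, as the lemma statement already does.
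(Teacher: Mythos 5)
The paper does not actually prove this lemma; it explicitly defers to ``standard textbooks of algebraic number theory'' in the sentence immediately preceding the statement, so there is no in-paper argument to compare against. Your proof via Dedekind's criterion applied to $\Ok=\mbb{Z}[\xi]$ (with the minimal polynomials $x^2-d$ and $x^2-x+\tfrac{1-d}{4}$, whose discriminants are $4d$ and $d$, i.e.\ exactly $\Delta_{\mbb{K}}$) is correct, and it is precisely the standard textbook argument the paper is implicitly citing; your careful separate treatment of $p=2$ against the Kronecker symbol's definition at $2$ closes the one point where a careless version of this argument would be incomplete.
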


\begin{lemma}\label{lma:prime_ideal}
    Let $p$ be an odd rational prime. For a quadratic field $\mbb{K}=\mbb{Q}(\sqrt{d})$, one has
    \begin{itemize}
        \item if $p$ ramifies in $\Ok$, then $\mfk{p}=(p,\sqrt{d})$ is a prime ideal lying above $p$,
        \item if $p$ splits in $\Ok$, then $\mfk{p}=(p,a+\sqrt{d})$ is a prime ideal lying above $p$ for any $a$ such that $a^2\equiv d\mod p$.
    \end{itemize}
    Moreover, such $a$ can be efficiently found (See for example \cite[Algorithm 1.5.1]{Cohen93}.)
\end{lemma}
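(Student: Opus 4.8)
The plan is to invoke the Kummer--Dedekind factorization theorem, which says that if $\Ok$ contains an order of the form $\Z[\theta]$ with $\theta$ having minimal polynomial $f$, and if the rational prime $p$ does not divide the index $[\Ok:\Z[\theta]]$, then the factorization of $p\Ok$ into prime ideals mirrors the factorization of $\bar f$ in $\F_p[x]$: writing $\bar f=\prod_i \bar g_i^{\,e_i}$ with the $g_i$ monic lifts of distinct monic irreducibles, one has $p\Ok=\prod_i \mfk{p}_i^{\,e_i}$ with $\mfk{p}_i=(p,g_i(\theta))$, and the residue degree of $\mfk{p}_i$ equals $\deg g_i$. Here I would take $\theta=\sqrt{d}$, so $f(x)=x^2-d$; the order $\Z[\sqrt{d}]$ is either all of $\Ok$ (when $d\equiv2,3\bmod4$) or has index $2$ in it (when $d\equiv1\bmod4$), and since $p$ is \emph{odd} we always have $p\nmid[\Ok:\Z[\sqrt{d}]]$, so the theorem applies verbatim.

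Next I would split into the two cases using Lemma~\ref{lma:prime_category}. If $p$ ramifies then $\left(\tfrac{\Delta_{\mbb{K}}}{p}\right)=0$, i.e. $p\mid\Delta_{\mbb{K}}$; for odd $p$ (where $\Delta_{\mbb{K}}$ is $d$ or $4d$) this forces $p\mid d$, hence $f(x)\equiv x^2\pmod p$, so Kummer--Dedekind yields $p\Ok=\mfk{p}^2$ with $\mfk{p}=(p,\sqrt{d})$, a prime of residue degree $1$ lying above $p$. If $p$ splits then $\left(\tfrac{\Delta_{\mbb{K}}}{p}\right)=1$, which for odd $p$ means $d$ is a nonzero quadratic residue mod $p$, so there is an integer $a$ with $p\nmid a$ and $a^2\equiv d\pmod p$. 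Then $f(x)\equiv(x-a)(x+a)\pmod p$, and because $p$ is odd these two monic linear factors are distinct; Kummer--Dedekind gives $p\Ok=(p,\sqrt{d}-a)\,(p,\sqrt{d}+a)$ with each factor a prime of residue degree $1$ above $p$. Since $(-a)^2\equiv d$ as well, the choice of sign is immaterial, so $(p,a+\sqrt{d})$ is a prime ideal above $p$ for \emph{any} $a$ with $a^2\equiv d\pmod p$, which is exactly the assertion. Finally, producing such an $a$ is precisely computing a square root modulo $p$, done by Tonelli--Shanks (Cohen, Algorithm~1.5.1), giving the efficiency claim.

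The step I expect to need the most care is the reduction to the order $\Z[\sqrt{d}]$ when $d\equiv1\bmod4$, since there $\Ok=\Z\!\left[(1+\sqrt{d})/2\right]$ is strictly larger. One clean way is to observe that the conductor of $\Z[\sqrt{d}]$ in $\Ok$ is $(2)$, which is coprime to the odd prime $p$, so Dedekind's theorem is valid for the non-maximal order. Alternatively — and this is the route I would actually spell out for self-containedness — I would give the direct verification: in $\Ok/(p,a+\sqrt{d})$ the relation $a+\sqrt{d}\equiv0$ lets one eliminate $\sqrt{d}$, and (since $2$ is invertible mod $p$) also the generator $(1+\sqrt{d})/2$, so the quotient is generated over $\F_p$ by the image of $\Z$ and therefore has order $1$ or $p$. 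To exclude order $1$, note that $(a+\sqrt{d})(a-\sqrt{d})=a^2-d\equiv 0\pmod p$ together with $p\nmid a$ gives $(p,a+\sqrt{d})(p,a-\sqrt{d})=(p)$, so the two ideal norms multiply to $p^2$; as each is at most $p$, each is exactly $p$, so both ideals are proper, hence maximal, hence prime and lying above $p$. The same computation in the ramified case ($a\equiv0$) gives $(p,\sqrt{d})^2=(p)$, recovering the first bullet.
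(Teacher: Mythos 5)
The paper does not supply its own proof of this lemma: immediately before the two lemmas it says ``whose proofs can be found in standard textbook of algebraic number theory,'' and for the algorithmic claim it simply cites Cohen's Algorithm~1.5.1. So there is no paper argument to compare against line by line; your proposal is being judged on its own.

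Your argument is correct and is the standard one. The Kummer--Dedekind route is exactly right: taking $\theta=\sqrt{d}$ with $f(x)=x^2-d$, noting that $[\Ok:\Z[\sqrt{d}]]$ is $1$ or $2$ so that an \emph{odd} $p$ never divides the conductor, and then matching the factorization of $\bar f$ in $\mbb{F}_p[x]$ to that of $p\Ok$. Both case analyses check out: for $p$ odd, $p\mid\Delta_{\mbb{K}}$ forces $p\mid d$ (whether $\Delta_{\mbb{K}}=d$ or $4d$), giving $\bar f=x^2$; and $\bigl(\tfrac{\Delta_{\mbb{K}}}{p}\bigr)=1$ forces $d$ to be a nonzero square mod $p$, giving $\bar f=(x-a)(x+a)$ with the two factors distinct precisely because $p\nmid 2a$. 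The alternative self-contained verification at the end is also correct and arguably preferable for a communications-audience paper, since it avoids appealing to the full Dedekind theorem: you show $(p,a+\sqrt{d})(p,a-\sqrt{d})=(p)$ using $p\nmid 2a$ to recover $p$ from $p^2$ and $2ap$, bound each quotient's size by $p$ because $\sqrt{d}$ (hence also $(1+\sqrt{d})/2$, since $2$ is a unit mod $p$) is eliminated, and conclude both norms equal $p$. One small point worth making explicit if you write this up: the lemma as stated lets $a$ be \emph{any} solution of $a^2\equiv d\pmod p$, and in the ramified case $a\equiv 0$ is the only solution, so the second bullet's formula $(p,a+\sqrt{d})$ literally specializes to $(p,\sqrt{d})$ there --- your ``same computation'' remark captures this, and it is a nice way to present the two bullets uniformly.
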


One can then use a ring of quadratic integers $\Ok$ together with a splitting prime $p$ with $p\Ok=\mfk{p}\bar{\mfk{p}}$ to construct a proposed lattice index code. This will allow two users having their messages from the same field $\mbb{F}_p$. The corresponding constellations are $\Psi(\Ok/\mfk{p})$ and $\Psi(\Ok/\bar{\mfk{p}})$ for the users 1 and 2, respectively. One example can be found in Example~\ref{exp:noPID} where two users can both use coding over $\mbb{F}_7$.

\subsection{Cyclotomic Fields}
Let $\zeta_m$ be a primitive $m$th root of unity and $n=\varphi(m)$ where $\phi$ is the Euler phi function. Then $\mbb{K}_m=\mbb{Q}(\zeta_m)$ is a totally complex number field with degree $n$. Thus, Theorem~\ref{thm:si_gain_complex} applies. The ring of integers of is $\mfk{O}_{\mbb{K}_m}=\mbb{Z}[\zeta_m]$ given by
\begin{equation}
    \mbb{Z}[\zeta_m] = \{a_0+a_1\zeta_m+\ldots a_{n-1}\zeta_m^{n-1}: a_i\in\mbb{Z}\}.
\end{equation}
The discriminant of $\mfk{O}_{\mbb{K}_m}=$ is given by
\begin{equation}
    \Delta_{\mbb{K}_m} = (-1)^{\varphi(m)/2} \frac{m^{\varphi(m)}}{\underset{{p|m}}{\Pi} p^{\varphi(m)/(p-1)}}.
\end{equation}
Note that there are $n=\varphi(m)$ integers less than or equal to $m$ that is relatively prime to $m$. Call such integers $n_i$. The $n$ $\mbb{Q}$-monomorphisms are given by
\begin{equation}
    \sigma_i(\zeta_m) = \zeta_m^{n_i}.
\end{equation}
The study of cyclotomic extensions has a rich history and plays an important role in the long pursuit of the Fermat's last theorem.

Let $p$ be a natural prime: i) $p\mfk{O}_{\mbb{K}_m}$ ramifies if and only if $p|m$ and ii) if gcd$(p,m)=1$ and $f$ is the least natural number such that $p^f \equiv 1 \hspace{-3pt}\mod m$, then $p\mfk{O}_{\mbb{K}_m}=\mfk{p}_1\cdot\ldots\cdot \mfk{p}_h$ where $h\cdot f = n$ and $f$ is the inertial degree for $\mfk{p}_1,\ldots,\mfk{p}_h$. In particular, $p\Ok$ splits completely into $p\Ok = \mfk{p}_1\cdot\ldots\cdot \mfk{p}_n$ with $N(\mfk{p}_i)=p$ for $i\in\{1,\ldots,n\}$ if and only if $p\equiv 1 \hspace{-3pt}\mod m$ (the cyclotomic reciprocity law \cite[Theorem 2.13]{washington97}). Unlike quadratic integers, for a general ring of integers, prime ideals may not be easily determined as that in Lemma~\ref{lma:prime_category}. However, one can always use the roots of the minimal polynomial mod $p$ to find prime ideals (see for example \cite[Proposition 2.14]{washington97} which works for any Dedekind domain).

We can now design lattice index codes over cyclotomic integers. Consider broadcasting $K$ independent messages as described above. We first construct cyclotomic extension $\mbb{K}_m$ with degree $\phi(m)=K$. By Dirichlet's prime theorem, there are infinitely many primes $p\equiv 1 \hspace{-3pt}\mod m$ for every $m\in \mbb{Z}$. Thus, for such primes, $p\Ok$ splits completely into $K$ prime ideals. We then use those prime ideals to construct a lattice index code as described in Section~\ref{sec:proposed_LIC}. Note that for this design, all the messages would be from the same field $\mbb{F}_p$.

\begin{example}
    Let $K=4$. We choose $m=5$ and thus $\phi(5)=4$. Note that $11\equiv 1 \hspace{-3pt}\mod 5$ and hence $11\Ok$ splits completely into $4$ prime ideals $\mfk{p}_1,\ldots,\mfk{p}_4$ where $N(\mfk{p}_1)=N(\mfk{p}_2)=N(\mfk{p}_3)=N(\mfk{p}_4)=11$. One can then construct a lattice index code with these prime ideals as proposed in Section~\ref{sec:proposed_LIC}. This will allow 4 users having messages from the same field $\mbb{F}_{11}$.
\end{example}

\subsection{Maximal Totally Real Subfields of Cyclotomic Fields}
So far, we have provided two examples that are totally complex. In the following, we consider a totally real example. Let $\mbb{Q}(\zeta_m)$ be the $m$th cyclotomic field as above. $\mbb{K}_m^+ \defeq \mbb{Q}(\zeta_m+\zeta_m^{-1})$ is its maximal totally real subfield and the degree $[\mbb{Q}(\zeta_m):\mbb{K}^+] = 2$. Thus, the degree $n=[\mbb{K}_m^+:\mbb{Q}]=\varphi(m)/2$ and Theorem~\ref{thm:si_gain_real} applies. The ring of integers of $\mbb{K}_m^+$ is $\mfk{O}_{\mbb{K}_m^+} = \mbb{Z}[\zeta_m+\zeta_m^{-1}]$. Moreover, $p\mfk{O}_{\mbb{K}_m^+}$ splits completely into $\mfk{p}_1\cdot\ldots\cdot \mfk{p}_n$ with $N(\mfk{p}_i)=p$ for $i\in\{1,\ldots,n\}$ if and only if $p\equiv \pm 1 \hspace{-3pt}\mod m$. If $m$ is a natural prime, the discriminant of $\mbb{K}_m^{+}$ can be easily computed
\begin{equation}
    \Delta_{\mbb{K}_m^{+}} = m^{\frac{m-3}{2}}.
\end{equation}

\begin{example}
    Let $K=3$. We choose $m=7$ and thus $\phi(7)/2=3$. Note that $13\equiv -1 \hspace{-3pt}\mod 7$ and hence $13\mfk{O}_{\mbb{K}_m^+}$ splits completely into $3$ prime ideals $\mfk{p}_1,\ldots,\mfk{p}_3$ where $N(\mfk{p}_1)=N(\mfk{p}_2)=N(\mfk{p}_3)=13$. We can then use these prime ideals to construct a proposed lattice index code with all three messages over $\mbb{F}_{13}$.
\end{example}

\subsection{Hilbert Class Field Tower}
Notice that in the above design, we first fix $K$ and construct a lattice index code from a number field chosen according to $K$. The messages are from $\mbb{F}_p$ whose size $p$ heavily depends on $K$ and can be very large for large $K$. Here, we provide a design example where the field size $p$ can be fixed for arbitrary $K$. Before proceeding, we must note that this design is inspired by \cite{guruswami03}.

Let us start by introducing the foundation of the class field theory. The following theorem was conjectured by Hilbert in 1898 and proved by Furtw\"{a}ngler in 1930.
\begin{theorem}[Hilbert 1898 and Furtw\"{a}ngler 1930]
    For any number field $\mbb{K}$, there exists a unique finite extension $\mbb{K}'$ (called the Hilbert class field) such that i) $\mbb{K}'/\mbb{K}$ is Galois and the Galois group is isomorphic to the ideal class group of $\mbb{K}$; ii) $\mbb{K}'/\mbb{K}$ is the maximal unramified Abelian extension; iii) for any prime $\mfk{p}$, the inertial degree is the order of $\mfk{p}$ in the ideal class group of $\mbb{K}$; and iv) every ideal of $\mbb{K}$ is principal in $\mbb{K}'$.
\end{theorem}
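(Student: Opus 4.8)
The plan is to recall that this is the foundational existence statement of global class field theory, so rather than building it from scratch I would deduce it from the standard main theorems of that theory; in the paper itself only a pointer to a reference (e.g.\ \cite{washington97} or a dedicated text on class field theory) is needed, but the skeleton is as follows. Fix an algebraic closure $\bar{\mbb{K}}$ and work with the idele class group $C_{\mbb{K}} = \mbb{A}_{\mbb{K}}^{\times}/\mbb{K}^{\times}$. Let $N \subseteq C_{\mbb{K}}$ be the open subgroup generated by the images of all the unit groups $\mfk{O}_v^{\times}$ at the finite places together with the connected component at the infinite places; then $C_{\mbb{K}}/N$ is canonically isomorphic to the ideal class group $\mathrm{Cl}(\mbb{K})$, which is finite. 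The existence theorem of class field theory yields a finite abelian extension $\mbb{K}'/\mbb{K}$ inside $\bar{\mbb{K}}$ whose norm subgroup $N_{\mbb{K}'/\mbb{K}}C_{\mbb{K}'}$ is exactly $N$, and the order-reversing bijection $L \mapsto N_{L/\mbb{K}}C_L$ between finite abelian subextensions of $\bar{\mbb{K}}/\mbb{K}$ and open finite-index subgroups of $C_{\mbb{K}}$ makes $\mbb{K}'$ unique. This $\mbb{K}'$ is the Hilbert class field.

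Properties (i)--(iii) then fall out of Artin reciprocity. The Artin map induces an isomorphism $\mathrm{Gal}(\mbb{K}'/\mbb{K}) \cong C_{\mbb{K}}/N \cong \mathrm{Cl}(\mbb{K})$, which is (i). Because the finite components of $N$ contain every $\mfk{O}_v^{\times}$ and the infinite components are the full connected component, the conductor of $\mbb{K}'/\mbb{K}$ is the trivial modulus, so $\mbb{K}'/\mbb{K}$ is unramified at all places, finite and infinite; conversely, if $L/\mbb{K}$ is abelian and unramified everywhere then its conductor is trivial, hence $N_{L/\mbb{K}}C_L \supseteq N$, hence $L \subseteq \mbb{K}'$ --- this is the maximality in (ii). For (iii): since $\mbb{K}'/\mbb{K}$ is unramified, every prime $\mfk{p}$ has a well-defined Frobenius, the Artin map sends $\mfk{p}$ to it, and under the identification with $\mathrm{Cl}(\mbb{K})$ the Frobenius of $\mfk{p}$ is the class $[\mfk{p}]$; the inertial degree of $\mfk{p}$ in $\mbb{K}'$ equals the order of its Frobenius, hence the order of $[\mfk{p}]$ in $\mathrm{Cl}(\mbb{K})$.

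The genuinely hard part is (iv), the principal ideal theorem (\emph{Hauptidealsatz}), which is Furtw\"{a}ngler's contribution and is \emph{not} formal. I would follow Artin's reduction: form the tower $\mbb{K} \subseteq \mbb{K}' \subseteq \mbb{K}''$ where $\mbb{K}''$ is the Hilbert class field of $\mbb{K}'$, check that $\mbb{K}''/\mbb{K}$ is Galois, set $G = \mathrm{Gal}(\mbb{K}''/\mbb{K})$, and identify $\mathrm{Gal}(\mbb{K}''/\mbb{K}')$ with the commutator subgroup $G' = [G,G]$. Functoriality of the Artin map under enlarging the base field identifies the extension-of-ideals homomorphism $\mathrm{Cl}(\mbb{K}) \to \mathrm{Cl}(\mbb{K}')$ with the group-theoretic transfer (Verlagerung) $G^{\mathrm{ab}} \to (G')^{\mathrm{ab}}$. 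Assertion (iv) --- that this map is zero --- thus becomes the purely finite-group statement that the transfer from $G^{\mathrm{ab}}$ to the abelianization of $[G,G]$ is trivial for every finite group $G$, which I would cite (or reproduce) from Furtw\"{a}ngler, with the later simplifications of Iyanaga, Witt, and others. I expect this transfer computation to be the main obstacle: the class field theory inputs (existence theorem, Artin reciprocity, conductor behaviour) are standard black boxes, whereas the vanishing of the transfer is a delicate combinatorial identity with no short conceptual proof, and setting up the precise dictionary ``extension of ideals $=$ transfer'' along the tower $\mbb{K} \subseteq \mbb{K}' \subseteq \mbb{K}''$ requires careful bookkeeping with the Artin map.
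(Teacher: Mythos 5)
The paper gives no proof of this theorem at all: it is stated as a classical background result (attributed to Hilbert's 1898 conjecture and Furtw\"{a}ngler's 1930 proof) purely to set up the Hilbert-class-field-tower construction borrowed from \cite{guruswami03}. So there is no in-paper argument to compare your proposal against, and citing a standard class-field-theory reference is exactly what the paper does.

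Taken on its own, your sketch is a correct outline of the standard modern proof. The idelic reformulation $C_{\mbb{K}}/N \cong \mathrm{Cl}(\mbb{K})$ with $N$ generated by the local unit groups and the connected component, the existence theorem producing $\mbb{K}'$ with prescribed norm group, uniqueness via the inclusion-reversing norm-group correspondence, Artin reciprocity for (i), trivial conductor for unramifiedness and maximality in (ii), and the Frobenius-of-$\mfk{p}$-equals-$[\mfk{p}]$ computation for (iii) are all right. You are also correct to flag (iv) as the genuinely nonformal part: Artin's reduction of the \emph{Hauptidealsatz} to the vanishing of the transfer $G^{\mathrm{ab}} \to (G')^{\mathrm{ab}}$ is where the real work lives, and the identification of ``extension of ideals along $\mbb{K} \subseteq \mbb{K}' \subseteq \mbb{K}''$'' with the group-theoretic Verlagerung does require care (functoriality of the Artin map under base change, and the fact that $\mbb{K}''/\mbb{K}$ is Galois because the Hilbert class field is canonical). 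One small addition worth keeping in mind: (ii) as stated in the paper includes unramifiedness at the infinite places, which is precisely why $N$ must contain the full connected component and not merely the identity component of the finite part --- your sketch handles this, but it is the point most often glossed over.
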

For a prime $p$, the Hilbert $p$-class field of $\mbb{K}$ is the maximal $p$-extension (i.e., its degree is a power of $p$) $\mbb{K}_p'$ of $\mbb{K}$ contained in $\mbb{K}'$. One can construct a sequence of $p$-extensions $\{\mbb{K}_i\}$ with $\mbb{K}_i = (\mbb{K}_{i-1})_p'$ and we refer to this sequence of fields as the $p$-class field tower of $\mbb{K}_0$. The tower terminates at $i$ if it is the smallest index such that $\mbb{K}_{i+1}=\mbb{K}_i$. One can also specify a set of primes $T$ in which every prime splits completely in every field in the sequence. We call such sequence of field extensions the $T$-decomposing $p$-class field tower.

From a result in \cite[Proposition 19]{guruswami03} \cite{lenstra86}, one can construct an infinite $T$-decomposing $2$-class field tower if some mild conditions hold. In what follows, we provide an example which is borrowed from \cite[Lemma 20]{guruswami03}.

\begin{example}
    Let $d=3\cdot 5\cdot 7\cdot 11\cdot 13\cdot 17\cdot 19$ and consider the imaginary quadratic extension $\mbb{K}_0=\mbb{Q}(\sqrt{-d})$. One can show that $29 \mfk{O}_{\mbb{K}_0} = \mfk{p}_1\mfk{p}_2$ with $N(\mfk{p}_1)=N(\mfk{p}_2)=29$. Let $T=\{\mfk{p}_1,\mfk{p}_2\}$. One can show that $\mbb{K}_0$ has an infinite $T$-decomposing 2-class field tower $\mbb{K}_0\subset \mbb{K}_1 \subset\ldots$ which never terminates and $29$ splits completely all the way up the tower. For any $K$, one can then pick a totally complex field $\mbb{K}_i$ in this sequence which has a degree $n\geq K$ and $K$ prime ideals with norm $29$ to construct a lattice index code. 
\end{example}

\section{Coded Index Modulation with Constellations from Number Fields}\label{sec:CIM}
In \cite{viterbo_isit15}, Natarajan \textit{et al.} proposed the coded index modulation where powerful outer codes are concatenated with inner lattice index code (called index modulation in this scenario). The objective of the outer codes is to provide coding gains on top of side information gains provided by the index modulation. In this section, we explore this possibility and extend coded index modulation in \cite{viterbo_isit15} to algebraic number fields.

We use the method proposed in Section~\ref{sec:proposed_LIC} to obtain index modulation from a number field, and then use it in conjunction with powerful linear codes to obtain coding gains on top of side information gains. More specifically, as shown in Fig.~\ref{fig:index_CM}, one first picks an $\Ok$ and $K$ of its prime ideals lying above $p_1, \ldots, p_K$ with inertial degrees $f_1, \ldots, f_K$, respectively. The messages $w_1, \ldots, w_K$ are then from $\mbb{F}_{p_1^{f_1}}, \ldots, \mbb{F}_{p_K^{f_K}}$, respectively. The encoder then uses $N$-dimensional linear codes $C_1,\ldots, C_K$ over $\mbb{F}_{p_1^{f_1}},\ldots, \mbb{F}_{p_K^{f_K}}$, respectively, for outer coding. Each entry of the codewords $\mathbf{c}_1,\ldots,\mathbf{c}_K$ are then mapped to form $\tilde{x}$ via the mapping $\mc{M}$ in \eqref{eqn:proposed_LIC} and the embedded version $\mathbf{x}=\Psi(\tilde{x})$ is sent. Note that for each entry of $\mathbf{c}_1,\ldots,\mathbf{c}_K$, the corresponding $\mathbf{x}$ is $n$-dimensional. Thus, the transmitted signal lies in $\mbb{R}^{nN}$.
\begin{figure}
    \centering
    \includegraphics[width=2.in]{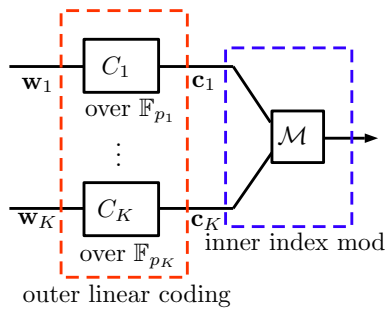}
    \caption{The index coded modulation.}
    \label{fig:index_CM}
\end{figure}

A simulation result is presented in Fig.~\ref{fig:coded} where we implement non-binary LDPC codes in conjunction with the proposed lattice index coding. Specifically, the index modulation we use in this example is the one described in Example~\ref{exp:noPID} where the constellations are isomorphic to $\mbb{F}_7\times\mbb{F}_7$. For outer coding of the two messages, we adopt two identical regular $(3,6)$-LDPC codes \cite{urbanke_book} over $\mbb{F}_7$ with progressive edge growth algorithm \cite{hu05} for selecting the parity check matrix. We set $N=4800$ and hence each message is of length $2400$ and the overall dimension of the transmitted signal is $nN=9600$. When $S=\{\phi\}$, the maximum number of iterations is set to 40 for each code and 5 iterations between two codes. When $S=\{1\}$ or $S=\{2\}$, the maximum number of iterations is set to 200. The simulation stops when 10000 symbol errors are observed. In Fig.~\ref{fig:coded}, we observe that the index coded modulation is capable of providing side information gains while enjoying sharp waterfall region resulting from coding gains offered by the outer codes. Note that the scheme adopted in Fig.~\ref{fig:coded} is by no means the best. This is also discussed in Remark~\ref{rmk:best_quad}. The purpose of this simulation is merely to demonstrate that when used in conjunction with power outer codes, the proposed scheme is able to enjoy coding gains on top of side information gains.

\begin{figure}
    \centering
    \includegraphics[width=3.5in]{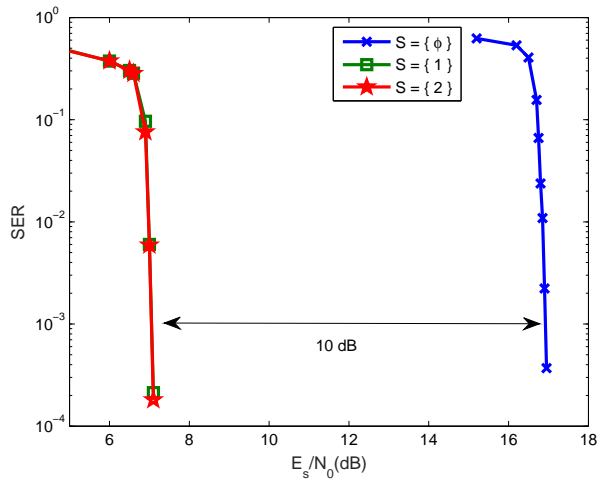}
    \caption{SNR versus symbol error rate over the AWGN network for the index coded modulation with the proposed lattice index code in Example~\ref{exp:noPID} as inner modulation and two identical (3,6) regular LDPC codes over $\mbb{F}_7$ as outer codes.}
    \label{fig:coded}
\end{figure}

\section{Concluding Remarks}\label{sec:conclude}
In this paper, the problem of broadcasting $K$ independent messages to multiple users where each of them has a subset of messages as side information has been studied. A lattice index coding scheme has been proposed which is a generalization of the scheme in \cite{viterbo14index} to general rings of algebraic integers. For some interesting classes of number fields, upper and lower bounds on the side information gains have been provided which coincide either exactly or asymptotically in message rates. This generalization has substantially expanded the design space and perhaps more importantly can provide diversity gains in addition to side information gains when used over the Rayleigh fading network. Some interesting design examples in which messages are all from the same finite field have been discussed. One potential future work is to consider a larger class of nested lattice codes in addition to self-similar ones considered in \cite{viterbo14index}. A natural extension along this line is to use Construction $\pi_A$ lattices Proposed by Huang and Narayanan in \cite{huang15piA} that has been shown able to produce good lattices. Using lattices thus constructed to achieve the capacity region of the problem studied in this paper is currently under investigation.

\section*{Acknowledgment}
The author would like to thank Prof. Krishna R. Narayanan for helpful discussions and Mr. Ping-Chung Wang for performing simulations in Fig.~\ref{fig:coded}.

\appendices
\section{Review of Algebraic Number Theory}\label{apx:prelim}
In the appendix, we provide some background knowledge on abstract algebra and algebraic number theory to facilitate discussion followed by this section. We also provide some standard properties which will be useful later on. All the properties are provided without proofs and the reader is referred to standard textbooks (for example \cite{Hungerford74} \cite{lang94}) for details.

\subsection{Algebra}
Let $\mc{R}$ be a commutative ring. An \textit{integral domain} is a commutative ring with identity and no zero divisors. An additive subgroup $\mfk{I}$ of $\mc{R}$ satisfying $ar\in\mfk{I}$ for $a\in\mfk{I}$ and $r\in\mc{R}$ is called an \textit{ideal} of $\mc{R}$. An ideal generated by a singleton is called a \textit{principal ideal}. A \textit{principal ideal domain} (PID) is an integral domain in which every ideal is principal. Let $a, b\in\mc{R}$ and $\mfk{I}$ be an ideal of $\mc{R}$; then $a$ is congruent to $b$ \textit{modulo} $\mfk{I}$ if $a-b\in\mfk{I}$. The coset decomposition $\mc{R}/\mfk{I}$ forms a ring and is called the quotient ring.

A proper ideal $\mfk{p}$ of $\mc{R}$ is said to be a \textit{prime ideal} if for $a, b\in\mc{R}$ and $ab\in\mfk{p}$, then either $a\in\mfk{p}$ or $b\in\mfk{p}$. A proper ideal $\mfk{I}$ of $\mc{R}$ is said to be a \textit{maximal ideal} if $\mfk{I}$ is not contained in any strictly larger proper ideal. Let $\mc{R}_1$ and $\mc{R}_2$ be rings. A function $\sigma:\mc{R}_1\rightarrow \mc{R}_2$ is a \textit{ring homomorphism} if
\begin{align}
    \sigma(a + b) &= \sigma(a) + \sigma(b) ~\forall a,b\in\mc{R}_1 \text{~and} \\
    \sigma(a\cdot b) &= \sigma(a)\cdot \sigma(b),~\forall a,b\in\mc{R}_1.
\end{align}
A homomorphism is said to be \textit{monomorphism} if it is injective and \textit{isomorphism} if it is bijective. Let $\mc{R}$ be a commutative ring, and $\mfk{I}_1,\ldots,\mfk{I}_n$ be ideals in $\mc{R}$. Then, from CRT, we have
\begin{equation}
    \mc{R}/\cap_{i=1}^n\mfk{I}_i \cong \mc{R}/\mfk{I}_1\times\ldots\times\mc{R}/\mfk{I}_n,
\end{equation}
where we use $\cong$ to denote ``isomorphic" and $\times$ to denote Cartesian product. Moreover, if $\mfk{I}_1,\ldots,\mfk{I}_n$ are relatively prime, then $\cap_{i=1}^n\mfk{I}_i = \Pi_{i=1}^n \mfk{I}_i$.



\subsection{Algebraic Numbers and Algebraic Integers}
An algebraic number is a root of some polynomial with coefficients in $\mbb{Z}$. A number field $\mbb{K}$ is a field extension of finite degree $[\mbb{K}:\mbb{Q}]$ and $\mbb{K}=\mbb{Q}(\theta)$ for some algebraic number $\theta$. An algebraic integer is a complex number which is a root of some monic polynomial (whose leading coefficient is 1) with coefficients in $\mbb{Z}$. The set of all algebraic integers forms a subring $\mc{B}$ of $\mbb{C}$. For any number field $\mbb{K}$, we write $\mfk{O}_{\mbb{K}}=\mbb{K}\cap\mc{B}$ and call $\mfk{O}_{\mbb{K}}$ the ring of integers of $\mbb{K}$.

For $\mbb{K}=\mbb{Q}(\theta)$ a number field of degree $n$ over $\mbb{Q}$, there are exactly $n$ distinct $\mbb{Q}$-monomorphism $\sigma_i:\mbb{K}\rightarrow\mbb{C}$. We denote by $(r_1,r_2)$ the signature of $\mbb{K}$ if among those $n=r_1 + 2r_2$ $\mbb{Q}$-monomorphisms, there are $r_1$ real $\mbb{Q}$-monomorphisms and $r_2$ pairs of complex $\mbb{Q}$-monomorphisms which are complex conjugate to each other. Moreover, for $\alpha\in\mbb{Q}(\theta)$, $\sigma_i(\alpha)$ for $i\in\{1,2,\ldots,n\}$ are the distinct zeros in $\mbb{C}$ of the minimal polynomial of $\alpha$ over $\mbb{Q}$. We call those $\sigma_i(\alpha)$ the \textit{conjugates} of $\alpha$ and define the \textit{norm} of $\alpha$ to be the product of conjugates as
\begin{equation}
    N_{\mbb{K}}(\alpha)=\prod_{i=1}^n\sigma_i(\alpha).
\end{equation}

Let $\{\alpha_1,\ldots,\alpha_n\}$ be a $\mbb{Q}$-basis for $\mbb{K}$. We define the \textit{discriminant} of $\{\alpha_1,\ldots,\alpha_n\}$ as
\begin{equation}
    \Delta[\alpha_1,\ldots,\alpha_n]\defeq \det\left(
                                                 \begin{array}{cccc}
                                                   \sigma_1(\alpha_1) & \sigma_1(\alpha_2) & \ldots & \sigma_1(\alpha_n) \\
                                                   \sigma_2(\alpha_1) & \sigma_2(\alpha_2) & \ldots & \sigma_2(\alpha_n) \\
                                                   \vdots & \vdots & \ddots & \vdots \\
                                                   \sigma_n(\alpha_1) & \sigma_n(\alpha_2) & \ldots & \sigma_n(\alpha_n) \\
                                                 \end{array}
                                               \right)^2.
\end{equation}
If $\{\alpha_1,\ldots,\alpha_n\}$ is a $\mbb{Z}$-basis for $\mfk{O}_\mbb{K}$, we define the discriminant of $\mbb{K}$ to be $\Delta_\mbb{K}\defeq \Delta[\alpha_1,\ldots,\alpha_n]$ which is invariant to the choice of $\mbb{Z}$-basis. Let $\mfk{I}$ be an ideal of $\mfk{O}_{\mbb{K}}$. The norm of $\mfk{I}$ is $N(\mfk{I})\defeq|\mfk{O}_{\mbb{K}}/\mfk{I}|$. Moreover, if $\{\beta_1,\ldots,\beta_n\}$ is a $\mbb{Z}$-basis for $\mfk{I}$, then $N(\mfk{I})=\sqrt{\frac{\Delta[\beta_1,\ldots,\beta_n]}{\Delta_{\mbb{K}}}}$. The norm is multiplicative, i.e., for two ideals $\mfk{I}_1$ and $\mfk{I}_2$ of $\Ok$, $N(\mfk{I}_1\mfk{I}_2)=N(\mfk{I}_1)N(\mfk{I}_2)$. 

Let $\mfk{p}$ be a prime ideal in $\Ok$. We say $\mfk{p}$ lies above a prime number $p$ if $\mfk{p}|p\mbb{Z}$. Since every $\Ok$ is a Dedekind domain, $p\Ok$ can be uniquely factorized into $p\Ok = \Pi_{l=1}^L \mfk{p}_l^{e_l}$ with $\mfk{p}_l$ distinct. We call $e_l$ the \textit{ramification index} of $\mfk{p}_l$ over $p$ and $f_l=[\Ok/\mfk{p}_l:\mbb{Z}/p\mbb{Z}]$ the \textit{inertial degree} of $\mfk{p}_l$ over $p$. Note that one must have $N(\mfk{p}_l)=p^{f_l}$. Also, the ramification indices and inertial degrees must satisfy $\sum_{l=1}^L e_lf_l=n$. If $e_l>1$ for some $l$, we say $p$ (or $p\Ok$ to be precise) ramifies in $\Ok$. If $L>1$, we say $p$ splits in $\Ok$. If $L=1$ and $e_1=1$ (i.e., $f_1=n$), we say $p$ remains inert in $\Ok$. One important property of the ring of integers of a number field is that every prime ideal $\mfk{p}$ is maximal and hence $\Ok/\mfk{p}\cong \mbb{F}_{p^f}$ with $f$ being the inertial degree.

\subsection{Canonical Embedding}
Here, we review the geometry induced by algebraic number fields. Consider a number field $\mbb{K}$ with degree $n$ and signature $(r_1,r_2)$. Let $\sigma_1,\ldots,\sigma_{r_1}$ be its real $\mbb{Q}$-monomorphisms and $\sigma_{r_1+1},\ldots,\sigma_n$ be the complex $\mbb{Q}$-monomorphisms where $\sigma_{r_1+r_2+i}=\bar{\sigma}_{r_1+i}$ for $i\in\{1,r_2\}$. The canonical embedding $\Psi:\mbb{K}\rightarrow \mbb{R}^{r_1}\times\mbb{C}^{r_2}\cong \mbb{R}^n$ is defined by\footnote{Note that the canonical mapping in the conference version of this paper contains typos which led to some errors. In this version, we have fixed this issue.}
\begin{equation}\label{eqn:embedding}
    \Psi(x) = (\sigma_1(x),\ldots,\sigma_{r_1}(x),\sigma_{r_1+1}(x),\ldots,\sigma_{r_1+r_2}(x)),
\end{equation}
for $x\in\mbb{K}$ and $\Psi$ is a ring homomorphism.

One can now use the canonical embedding to map $\Ok$ or ideals in $\Ok$ to lattices which we refer to as ideal lattices. Let $\{\alpha_1,\ldots,\alpha_n\}$ be a $\mbb{Z}$-basis for $\mfk{O}_\mbb{K}$, then $\Lambda_{\Ok}\defeq\Psi(\Ok)$ is a lattice in $\mbb{R}^{r_1}\times\mbb{C}^{r_2}$ with a basis $\{\Psi(\alpha_1),\ldots,\Psi(\alpha_n)\}$. Similarly, for an ideal $\mfk{I}\in\Ok$ with a $\mbb{Z}$-basis $\{\beta_1,\ldots,\beta_n\}$, $\Lambda_{\mfk{I}}\defeq \Psi(\mfk{I})$ is a lattice in $\mbb{R}^{r_1}\times\mbb{C}^{r_2}$ with a basis $\{\Psi(\beta_1),\ldots,\Psi(\beta_n)\}$.

\section{Construction using Lattices over Number Fields}\label{apx:lic_NF}
In Section~\ref{sec:proposed_LIC}, what we have proposed is merely an $n$-dimensional modulation scheme instead of a coding scheme. Here, akin to \cite{viterbo14index}, we extend the scheme proposed above over $\Ok$ to higher dimensions and construct a lattice index code from $\Ok$-lattices.

Let $\Ok$ be the ring of integers of a number field with degree $n$ and signature $(r_1,r_2)$. Let $\mfk{p}_1,\ldots,\mfk{p}_K$ be prime ideals lying above $p_1,\ldots,p_K$, respectively, and $\mc{M}$ be the ring isomorphism discussed above. Now consider a $m$-dimensional $\Ok$-lattices $\tilde{\Lambda}\defeq \mathbf{\tilde{G}}\cdot\Ok^m$ where $\mathbf{\tilde{G}}$ is the generator matrix. Also, define its sub-lattices $\tilde{\Lambda}_k\defeq \mathbf{\tilde{G}}\cdot \left(\Pi_{k'=1,k'\neq k}^K \mfk{p}_{k'}\right)^m$ and $\tilde{\Lambda}_s\defeq \mathbf{\tilde{G}}\cdot \left(\Pi_{k'=1}^K \mfk{p}_{k'}\right)^m$. The lattices considered are the above ones embedded into the Euclidean space as $\Lambda = \Psi(\tilde{\Lambda})$, $\Lambda_k = \Psi(\tilde{\Lambda}_k)$ for $k\in\{1,\ldots,K\}$, and $\Lambda_s = \Psi(\tilde{\Lambda}_s)$. We have the following decomposition
\begin{align}
    \Lambda &= \Psi(\mathbf{\tilde{G}}\cdot\Ok^m) \nonumber \\
    &\overset{(a)}{=} \mathbf{G} \cdot \Psi(\Ok^m) \nonumber \\
    &\overset{(b)}{=} \mathbf{G} \cdot \Psi\left( \mc{M}\left(\mbb{F}_{p_1^{f_1}}^m,\ldots,\mbb{F}_{p_K^{f_K}}^m\right)+(\Pi_{k'=1}^K \mfk{p}_{k'})^m \right) \label{eqn:lattice_decomp} \\
    &\overset{(c)}{=} \mathbf{G} \cdot \Psi\left( \sum_{k=1}^K (v_k \Pi_{k'=1,k'\neq k}^K \mfk{p}_{k'})^m +(\Pi_{k'=1}^K \mfk{p}_{k'})^m \right) \nonumber \\
    &= \sum_{k=1}^K \Lambda_k + \Lambda_s,
\end{align}
where in (a) $\mathbf{G}$ is an $mn\times mn$ matrix which consists of $m^2$ $n\times n$ sub-matrices $\mathbf{G}_{ij}=\diag(\Psi(\tilde{G}_{ij}))$ where $\tilde{G}_{ij}$ is the $i$th row $j$th column element of $\mathbf{\tilde{G}}$ for $1\leq i,j\leq m$, (b) follows from the decomposition of $\Ok$ in \eqref{eqn:decompose}, and in (c) $v_k$ is again the coefficient of the B\`{e}zout identity.

It is clear that $\Lambda_s\subset\Lambda_k\subset \Lambda$ for every $k\in\{1,\ldots,K\}$ and hence one can talk about coset decomposition. The proposed lattice index coding scheme uses this fact and is then given by
\begin{equation}\label{eqn:c_lattice}
    \mc{C} = \Lambda/ \Lambda_s = \sum_{k=1}^K \Lambda_k/\Lambda_s,
\end{equation}
where by $\Lambda_k/\Lambda_s$ we mean a complete set of coset representatives such that $\mc{C}$ would have the minimum energy.

We now prove some properties of lattice index codes thus constructed. We first note that for every $\mc{S}\subset\{1,\ldots,K\}$, it can be easily seen from \eqref{eqn:lattice_decomp} that
\begin{align}
    R_{\mc{S}} &= \frac{1}{mn} \log_2( \Pi_{l\in\mc{S}} p_k^{mf_k}) \nonumber \\
    &= \frac{1}{n} \sum_{k\in\mc{S}} \log_2( N(\mfk{p}_k)).
\end{align}
Moreover, note that from \eqref{eqn:c_lattice}, one has
\begin{equation}
    \mathbf{x} = \sum_{k=1}^K \lambda_k,
\end{equation}
where $\lambda_k\in\Lambda_k/\Lambda_s$. Given a $\mc{S}\in \{1,\ldots,K\}$, the signal belongs to
\begin{equation}
    \mathbf{x}_{\mc{S}}=\sum_{k\in\mc{S}}\lambda_k + \sum_{k\notin\mc{S}} \Lambda_k/\Lambda_s,
\end{equation}
which is a shifted version of
\begin{align}\label{eqn:x_k}
    &\sum_{k\notin\mc{S}} \Lambda_k/\Lambda_s \nonumber \\
    &= \mathbf{G} \cdot \Psi\left( \sum_{k\notin \mc{S}} (c_k \Pi_{k'=1,k'\neq k}^K \mfk{p}_{k'})^m \right)/ \mathbf{G} \cdot \Psi\left( (\Pi_{k'=1}^K \mfk{p}_{k'})^m \right) \nonumber \\
    &= \mathbf{G} \cdot \Psi\left( (\Pi_{k'\in\mc{S}} \mfk{p}_{k'})^m\right)/ \mathbf{G}\cdot \Psi\left( (\Pi_{k'=1}^K \mfk{p}_{k'})^m \right),
\end{align}
where the last equality follows from the CRT. In the following, we again provide a tight bound for imaginary quadratic integers which happen to be PIDs. This slightly generalized the result in \cite[Lemma 3]{viterbo14index}.

\begin{theorem}\label{thm:si_gain_Q_lattice}
    Let $\mbb{K}=\mbb{Q}(\sqrt{d})$ an imaginary quadratic field with $d<0$ square-free integer whose ring of integers $\Ok$ happens to be a PID. i.e., $d\in\{-1,-2,-3,-7,-11,-19,-43,-67,-163\}$. Let $\mc{C}$ be a lattice index code constructed using lattice over $\Ok$. For every $\mc{S}\subset \{1,\ldots,K\}$, we have $\Gamma(\mc{C},\mc{S})=6$ dB/bit/dim.
\end{theorem}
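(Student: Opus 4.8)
The plan is to follow the proof of Theorem~\ref{thm:si_gain_Q} almost verbatim, using the PID hypothesis to replace the product of prime ideals by a single generator. Assume $\emptyset\neq\mc{S}\subsetneq\{1,\dots,K\}$ (the remaining cases of $\mc{S}$ are degenerate), and write $\mfk{I}\defeq\Pi_{k\in\mc{S}}\mfk{p}_k$ and $\mfk{J}\defeq\Pi_{k=1}^{K}\mfk{p}_k$. Since $\Ok$ is a PID we have $\mfk{I}=\phi\Ok$ and $\mfk{J}=\psi\Ok$ for some $\phi,\psi\in\Ok$, and therefore $\big(\Pi_{k\in\mc{S}}\mfk{p}_k\big)^{m}=\phi\cdot\Ok^{m}$ and $\big(\Pi_{k=1}^{K}\mfk{p}_k\big)^{m}=\psi\cdot\Ok^{m}$ as sublattices of $\Ok^{m}$. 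Because $\mbb{K}$ is imaginary quadratic it has signature $(0,1)$, the canonical embedding is $\Psi(\cdot)=\sigma_1(\cdot)$ (acting componentwise on $\Ok^{m}$), and $\Psi$ intertwines multiplication by $\phi\in\Ok$ with multiplication by the complex scalar $\sigma_1(\phi)$; this scalar multiplication on $\mbb{C}^{m}\cong\mbb{R}^{2m}$ scales every Euclidean distance by the constant $|\sigma_1(\phi)|=\sqrt{N_{\mbb{K}}(\phi)}=\sqrt{N(\mfk{I})}$, exactly as in the proof of Theorem~\ref{thm:si_gain_Q}.

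Next I would pin down the lattices involved. With $\mathbf{G}$ as in Appendix~\ref{apx:lic_NF}, one has $\Lambda_s=\mathbf{G}\cdot\Psi\big(\psi\cdot\Ok^{m}\big)=\sigma_1(\psi)\,\Lambda$, while by \eqref{eqn:x_k} the constellation seen once $w_{\mc{S}}$ is revealed is a translate of $\Lambda_{\mc{S}}'/\Lambda_s$, where $\Lambda_{\mc{S}}'\defeq\mathbf{G}\cdot\Psi\big((\Pi_{k'\in\mc{S}}\mfk{p}_{k'})^{m}\big)=\mathbf{G}\cdot\Psi\big(\phi\cdot\Ok^{m}\big)=\sigma_1(\phi)\,\Lambda$. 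Thus $\Lambda_{\mc{S}}'$ is a $\sqrt{N(\mfk{I})}$-scaled rotated copy of the base lattice $\Lambda$. Now, just as $d_0$ is computed in Theorem~\ref{thm:si_gain_Q} and in \cite[Lemma~3]{viterbo14index}, the minimum distance of the minimum-energy coset code $\Lambda/\Lambda_s$ equals the length $d_{\min}(\Lambda)$ of a shortest nonzero vector of $\Lambda$: two distinct representatives differ by a nonzero element of $\Lambda$, and the coset of a shortest vector $\mathbf{v}$ of $\Lambda$ has a minimum-energy representative of length exactly $\|\mathbf{v}\|$, because $\Lambda_s=\sigma_1(\psi)\Lambda$ has shortest vector of length $\sqrt{N(\mfk{J})}\,d_{\min}(\Lambda)>d_{\min}(\Lambda)$ so $\mathbf{v}\notin\Lambda_s$. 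Running the same argument for $\Lambda_{\mc{S}}'/\Lambda_s$ — legitimate because $d_{\min}(\Lambda_{\mc{S}}')=\sqrt{N(\mfk{I})}\,d_{\min}(\Lambda)<\sqrt{N(\mfk{J})}\,d_{\min}(\Lambda)=d_{\min}(\Lambda_s)$, using that $\mfk{I}$ properly divides $\mfk{J}$ — gives $d_{\mc{S}}=d_{\min}(\Lambda_{\mc{S}}')=\sqrt{N(\mfk{I})}\,d_0$.

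Finally, feeding $d_{\mc{S}}^2/d_0^2=N(\mfk{I})$ and $R_{\mc{S}}=\tfrac1n\log_2 N(\mfk{I})$ from \eqref{eqn:R_s}, together with $n=[\mbb{K}:\mbb{Q}]=2$, into the definition of the side information gain yields
\[
\Gamma(\mc{C},\mc{S})=\frac{10\log_{10}\big(d_{\mc{S}}^2/d_0^2\big)}{R_{\mc{S}}}=\frac{20\log_{10}N(\mfk{I})}{\log_2 N(\mfk{I})}=6\ \text{dB/bit/dim},
\]
exactly as at the end of the proof of Theorem~\ref{thm:si_gain_Q}. The only genuine obstacle is the bookkeeping in the second paragraph, namely checking that passing to minimum-energy coset representatives does not shorten the minimum distance below the shortest vector of the finer lattice, uniformly for both $\Lambda$ and $\Lambda_{\mc{S}}'$; this is precisely what is done in the proof of Theorem~\ref{thm:si_gain_Q}, and the one new ingredient needed for the lattice version is that in a PID the ideal $\Pi_{k\in\mc{S}}\mfk{p}_k$ is principal, which is what makes $\Lambda_{\mc{S}}'$ a similarity image of $\Lambda$.
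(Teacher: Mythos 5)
Your proof is correct and follows the same route as the paper: use the PID hypothesis to write $\mfk{I}=\phi\Ok$, observe that $\Psi=\sigma_1$ together with the commutation $\mathbf{G}\cdot\Psi(\phi\tilde{\mathbf{x}})=\sigma_1(\phi)\,\mathbf{G}\cdot\Psi(\tilde{\mathbf{x}})$ makes the constellation after revealing $w_{\mc{S}}$ a $\sqrt{N(\mfk{I})}$-scaled similarity image of the base constellation, and then conclude exactly as in Theorem~\ref{thm:si_gain_Q}. Your careful bookkeeping with minimum-energy coset representatives fills in a detail that the paper elides by deferring to Theorem~\ref{thm:si_gain_Q}.
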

\begin{proof}
    For PIDs, \eqref{eqn:x_k} can be rewritten as
    \begin{equation}
        \mathbf{G}\cdot \Psi\left( \Pi_{k'\in\mc{S}} \phi_{k'}\Ok^m \right)/\mathbf{G}\cdot \Psi\left( \Pi_{k'=1}^K \phi_{k'}\Ok^m\right),
    \end{equation}
    and hence $\mathbf{x}_{\mc{S}}$ belongs to a shifted version of $\mathbf{G}\cdot \Psi\left( \Pi_{k'\in\mc{S}} \phi_{k'}\Ok^m \right)$. Moreover, for imaginary quadratic integers, $\Psi(.) = \sigma_1(.)$. Now, note that for every $\mathbf{\tilde{x}}\in\Ok^m$ and $\phi\in\Ok$, one has
    \begin{equation}\label{eqn:norm_PID}
        \mathbf{G}\cdot  \Psi\left( \phi\mathbf{\tilde{x}}\right) = \mathbf{G}\cdot \sigma_1(\phi)\mathbf{I}\cdot\Psi(\mathbf{\tilde{x}}) = \sigma_1(\phi) \mathbf{G}\cdot \Psi(\mathbf{\tilde{x}}).
    \end{equation}
    One can then show the result by following the steps in Theorem~\ref{thm:si_gain_Q}.


\end{proof}

\begin{remark}
    In \cite{viterbo14index}, Natarajan \textit{et al.} use another approach to bound the side information gains which involves the center density of lattices $\Lambda$ and $\Lambda_k$. A similar bound can be obtained straightforwardly for our scheme. This bound implies that for the proposed lattice index coding scheme over number fields, the gap is not necessarily limited by $\Delta_{\mbb{K}}$ as in Theorem~\ref{thm:si_gain_real} and Theorem~\ref{thm:si_gain_complex}; one may get a smaller gap by first constructing a dense lattice $\Lambda$ from $\Ok$ and then use $\Lambda$ for constructing lattice index codes as above. We suspect that for a sufficiently large dimension, one can use Construction A to construct very dense lattices over $\Ok$ which will shrink the gap. However, this is out of the scope of this paper and we leave it as a potential future work.
\end{remark}

\bibliographystyle{ieeetr}
\bibliography{journal_abbr,bib_lattice}

\begin{thebibliography}{10}

\bibitem{birk98}
Y.~Birk and T.~Kol, ``Informed-source coding-on-demand ({ISCOD}) over broadcast
  channel,'' in {\em Proc. IEEE INFOCOM}, pp.~1257--1264, Mar. 1998.

\bibitem{bar-yossef11}
Z.~Bar-Yossef, Y.~Birk, T.~S. Jayram, and T.~Kol, ``Index coding with side
  information,'' {\em IEEE Trans. Inf. Theory}, vol.~57, pp.~1479--1494, Mar.
  2011.

\bibitem{rouayheb10}
S.~{El Rouayheb}, A.~Sprintson, and C.~Georghiades, ``On the index coding
  problem and its relation to network coding and matroid theory,'' {\em IEEE
  Trans. Inf. Theory}, vol.~56, pp.~3187--3195, July 2010.

\bibitem{niesen14}
M.~A. Maddah-Ali and U.~Niesen, ``Fundamental limits of caching,'' {\em IEEE
  Trans. Inf. Theory}, vol.~60, pp.~2856--2867, May 2014.

\bibitem{wu07}
Y.~Wu, ``Broadcasting when receivers know some messages a priori,'' in {\em
  Proc. IEEE ISIT}, pp.~1141--1145, June 2007.

\bibitem{tie09}
J.~W. Yoo, T.~Liu, and F.~Xue, ``Gaussian broadcast channels with receiver
  message side information,'' in {\em Proc. IEEE ISIT}, pp.~2472--2476, June
  2009.

\bibitem{sima14}
J.~Sima and W.~Chen, ``Joint network and {Gelfand-Pinsker} coding for
  3-receiver {G}aussian broadcast channels with receiver message side
  information,'' in {\em Proc. IEEE ISIT}, pp.~81--85, June 2014.

\bibitem{asadi14}
B.~Asadi, L.~Ong, and S.~Johnson, ``The capacity of three-receiver {AWGN}
  broadcast channels with receiver message side information,'' in {\em Proc.
  IEEE ISIT}, pp.~2899--2903, June 2014.

\bibitem{oechtering08}
T.~Oechtering, C.~Schnurr, I.~Bjelakovic, and H.~Boche, ``Broadcast capacity
  region of two-phase bidirectional relaying,'' {\em IEEE Trans. Inf. Theory},
  vol.~54, pp.~454--458, Jan. 2008.

\bibitem{viterbo14index}
L.~Natarajan, Y.~Hong, and E.~Viterbo, ``Lattice index coding,'' {\em
  arXiv:1410.6569 [cs.IT]}.

\bibitem{viterbo_isit15}
L.~Natarajan, Y.~Hong, and E.~Viterbo, ``Capacity of coded index modulation,''
  in {\em Proc. IEEE ISIT}, pp.~596--600, June 2015.

\bibitem{conway1999sphere}
J.~H. Conway and N.~J.~A. Sloane, {\em Sphere Packings, Lattices, and Groups}.
\newblock Springer Verlag, 1999.

\bibitem{zamir_book}
R.~Zamir, {\em Lattice Coding for Signals and Networks}.
\newblock Cambridge University Press, 2014.

\bibitem{erez05}
U.~Erez, S.~Litsyn, and R.~Zamir, ``Lattices which are good for (almost)
  everything,'' {\em IEEE Trans. Inf. Theory}, vol.~51, pp.~3401--3416, Oct.
  2005.

\bibitem{tuncel06}
E.~Tuncel, ``{Slepian-Wolf} coding over broadcast channels,'' {\em IEEE Trans.
  Inf. Theory}, vol.~52, pp.~1469--1482, Apr. 2006.

\bibitem{joseph96}
J.~Boutros, E.~Viterbo, C.~Rastello, and J.-C. Belf{\i}ore, ``Good lattice
  constellations for both {Rayleigh and Gaussian} channels,'' {\em IEEE Trans.
  Inf. Theory}, vol.~42, pp.~502--518, Mar. 1996.

\bibitem{tse_viswananth}
D.~Tse and P.~Viswanath, {\em Fundamentals of Wireless Communication}.
\newblock Cambridge University Press, 2005.

\bibitem{oggier04}
F.~Oggier and E.~Viterbo, ``Algebraic number theory and code design for
  {Rayleigh} fading channels,'' {\em Foundations and Trends in Communications
  and Information Theory}, vol.~1, pp.~333--415, Dec. 2004.

\bibitem{peikert07}
C.~Peikert and A.~Rosen, ``Lattices that admit logarithmic worst-case to
  average-case connection factors,'' in {\em Proc. ACM STOC}, June 2007.

\bibitem{washington97}
L.~C. Washington, {\em Introduction to Cyclotomic Fields (Graduate Texts in
  Mathematics)}.
\newblock Springer, 1997.

\bibitem{guruswami03}
V.~Guruswami, ``Constructions of codes from number fields,'' {\em IEEE Trans.
  Inf. Theory}, vol.~49, pp.~594--603, Mar. 2003.

\bibitem{viterbo15_index_QAM}
L.~Natarajan, Y.~Hong, and E.~Viterbo, ``Index codes for the {Gaussian}
  broadcast channel using quadrature amplitude modulation,'' {\em IEEE Commun.
  Lett.}, Aug. 2015.

\bibitem{huang15_ACF}
Y.-C. Huang, K.~R. Narayanan, and P.-C. Wang, ``Adaptive compute-and-forward
  with lattice codes over algebraic integers,'' in {\em Proc. IEEE ISIT}, June
  2015.
\newblock arXiv:1501.07740 [cs.IT].

\bibitem{Cohen93}
H.~Cohen, {\em A Course in Computational Algebraic Number Theory (Graduate
  Texts in Mathematics)}.
\newblock Springer, 1993.

\bibitem{lenstra86}
H.~W. Lenstra, ``Codes from algebraic number fields,'' {\em Mathematics and
  Computer Science II, Fundamental Constributions in the {N}etherlands since
  1945}, pp.~95--104, 1986.

\bibitem{urbanke_book}
T.~Richardsin and R.~Urbanke, {\em Modern Coding Theory}.
\newblock Cambridge University Press, 2008.

\bibitem{hu05}
X.-Y. Hu, E.~Eleftheriou, and D.~M. Arnold, ``Regular and irregular progressive
  edge-growth tanner graphs,'' {\em IEEE Trans. Inf. Theory}, vol.~51,
  pp.~386--398, Jan. 2005.

\bibitem{huang15piA}
Y.-C. Huang and K.~Narayanan, ``Construction {$\pi_A$ and $\pi_D$} lattices:
  Construction, goodness, and decoding algorithms,'' {\em IEEE Trans. Inf.
  Theory}, June 2015.
\newblock submitted, arXiv:1506.08269 [cs.IT].

\bibitem{Hungerford74}
T.~W. Hungerford, {\em Algebra (Graduate Texts in Mathematics)}.
\newblock Springer, 1974.

\bibitem{lang94}
S.~Lang, {\em Algebraic Number Theory (Graduate Texts in Mathematics)}.
\newblock Springer, 1994.

\end{thebibliography}

\end{document}